\documentclass[%
 aip,
 jmp,
 amsmath,amssymb,
 reprint, onecolumn
]{revtex4-1}

\usepackage{graphicx}
\usepackage{dcolumn}
\usepackage{bm}
\usepackage[utf8]{inputenc}
\usepackage[T1]{fontenc}
\usepackage{mathptmx}
\usepackage{etoolbox}

\usepackage{amsthm}
\usepackage{enumitem}
\usepackage{mathtools}
\usepackage{braket}
\usepackage{hyperref}
\hypersetup{colorlinks=true,linkcolor=blue,citecolor=blue,urlcolor=blue}    

\makeatletter
\let\old@linewidth\linewidth
\makeatother
\usepackage{tikz}
\usetikzlibrary{arrows.meta,positioning,fit,backgrounds,calc,shapes.geometric}
\makeatletter
\AtBeginDocument{%
  \let\linewidth\old@linewidth
}
\makeatother

\newtheorem{theorem}{Theorem}

\newtheorem{conjecture}{Structural Conjecture}
\theoremstyle{definition}
\newtheorem{definition}{Definition}

\theoremstyle{remark}
\newtheorem{remark}{Remark}

\makeatletter
\def\@email#1#2{%
 \endgroup
 \patchcmd{\titleblock@produce}
  {\frontmatter@RRAPformat}
  {\frontmatter@RRAPformat{\produce@RRAP{*#1\href{mailto:#2}{#2}}}\frontmatter@RRAPformat}
  {}{}
}%
\makeatother

\newcommand{\R}{\mathbb{R}}
\newcommand{\C}{\mathbb{C}}
\newcommand{\Z}{\mathbb{Z}}

\newcommand{\M}{\mathcal{M}}
\newcommand{\A}{\mathcal{A}}

\newcommand{\Hil}{\mathcal{H}}

\newcommand{\Op}{\mathcal{O}}
\newcommand{\one}{\mathbb{1}}

\newcommand{\rmd}{\mathrm{d}}             
\newcommand{\rmi}{\mathrm{i}}             
\newcommand{\rme}{\mathrm{e}}             
\DeclareMathOperator{\spec}{spec}         
\DeclareMathOperator{\Tr}{Tr}             
\newcommand{\kB}{k_{\mathrm{B}}}          
\newcommand{\TU}{T_{\mathrm{U}}}          

\begin{document}

\preprint{}

\title[Algebraic quantum kinematics and SR-selection]{%
Algebraic quantum kinematics and SR-selection}

\author{Leonardo A.\ Pach\'{o}n}
\affiliation{guane Enterprises, R+D+I Unit, Medell\'{i}n 050010, Colombia}

\date{\today}

\begin{abstract}
We develop, as the first of a six-paper series, an
operator-algebraic framework relating non-relativistic quantum
mechanics and special relativity. Three structural facts organize
the framework. (i)~The photon sector of free QED is a transparent
realization: classical Fourier--Maxwell theory supplies a complex
Hilbert-space scaffold (inner product, symplectic form,
Schr\"odinger-form mode equation, polarization $\C^2$) with no
quantum input, and a single canonical commutator with scale $\hbar$
on the mode amplitudes promotes it to single-photon QED, with
photon indivisibility, the Planck relation $E=\hbar\omega$, and
the spin spectrum $\pm\hbar$ as theorems. (ii)~The constants $c$
and $\hbar$ play non-interchangeable roles: $c$ is intrinsic to
each Fourier-conjugate space, while $\hbar$ acts \emph{between}
them, converting kinematic phase rates into dynamical observables.
(iii)~Lifting the framework to a Haag--Kastler net with
sharper-than-equal-time microcausality and positive-energy spectrum
is structurally obstructed in the Galilean case; we state this as
the SR-selection conjecture and identify three strands of evidence
(Hegerfeldt spreading; absence of a known Galilean multi-particle
resolution; Reeh--Schlieder failure on Galilean Haag--Kastler
nets), the third established as a precise no-go theorem in the
second paper of the series. The framework's modular content
(Tomita--Takesaki, Bisognano--Wichmann, Unruh as KMS,
type-$\mathrm{III}_1$ universality) is collected as the algebraic
substrate on which the curved-background, dynamical-metric, and
crossed-product extensions of the third through sixth papers
operate.
\end{abstract}

\maketitle

\section{Introduction}\label{sec:intro}

\subsection{Motivation and aim of the series}
\label{sec:motivation}

The compatibility of the canonical commutator
$[\hat{q},\hat{p}]=\rmi\hbar\,\one$ with relativistic kinematics
is, in the standard pedagogical narrative, a contingent
twentieth-century synthesis achieved through the development of
relativistic quantum field theory~\cite{Weinberg1995,
PeskinSchroeder1995,Srednicki2007}. The two fundamental constants
$c$ and $\hbar$ enter as empirical inputs of two distinct theories,
and their joint appearance in QED is treated as a happy coincidence
to be reconciled rather than as a structural feature of a single
algebraic architecture.

A different organization is available in the operator-algebraic
formulation of quantum theory~\cite{Segal1947,HaagKastler1964,
Haag1992,Araki1999,BratteliRobinson1997}. In that formulation, the
canonical commutator with scale $\hbar$ is the deformation parameter
of a Weyl C*-algebra over a classical symplectic phase space, while
the kinematic group acts as a group of automorphisms of the algebra.
The separation between (i)~the algebraic content of the canonical
commutator, (ii)~the Hilbert-space realization (a representation of
the algebra), and (iii)~the kinematic group (an automorphism group)
turns the relationship between QM and SR into a structural question:
\emph{which kinematic groups are compatible with a Haag--Kastler net
of local algebras supporting non-trivial dynamics, with
sharper-than-equal-time microcausality and positive-energy
spectrum?}

The series of which the present paper is the first develops the
answer in six steps. The framework paper (this one) establishes the
algebraic architecture, exhibits the photon sector of free QED as a
transparent realization, articulates the complementary structural
roles of $c$ and $\hbar$, and states the corresponding
\emph{SR-selection conjecture} explicitly. The second
paper~\cite{Pachon2026b} establishes the load-bearing strand
of the conjecture's supporting evidence as a precise no-go theorem:
the standard Galilean Haag--Kastler axioms (G1)--(G6), augmented by
a natural Bargmann-charge hypothesis on the canonical fields, are
inconsistent with the Reeh--Schlieder property of the vacuum,
ruling out Galilean Haag--Kastler nets carrying Tomita--Takesaki
modular flow on local algebras with respect to the vacuum. The third
paper~\cite{Pachon2026c} extends the obstruction to the
curved-background setting via the Newton--Cartan ($c\to\infty$) limit
of relativistic Klein--Gordon AQFT, with Schwarzschild as a worked
example, and shows that black-hole thermodynamics and the Unruh
effect---both of which are modular-flow phenomena on the relativistic
side---collapse together in that limit. The fourth
paper~\cite{Pachon2026d} treats the dynamical-metric extension:
combining the Brunetti--Fredenhagen--Verch locally covariant QFT
functor with the Wald axioms on the renormalized stress-energy tensor
and Lovelock's local-tensor theorem, the algebraic-modular content
plus a self-consistency requirement closing the matter-to-geometry
loop forces the metric to satisfy field equations of the form
$G_{\mu\nu}=8\pi G\,\langle\hat{T}_{\mu\nu}\rangle_\omega$, with
Newton's $G$ entering as the empirical proportionality constant. The
fifth paper~\cite{Pachon2026e} establishes an equivalence theorem
for that algebraic forcing, identifying which axiom subsets are
individually necessary and which are jointly sufficient. The sixth
paper~\cite{Pachon2026f} extends the architecture to gravity-dressed
crossed-product algebras and exhibits the Galilean obstruction
in that setting.

The six papers thus constitute, in sequence, a framework, a
flat-space no-go theorem, a curved-background limit, a
dynamical-metric forcing argument, an equivalence theorem for that
forcing, and a crossed-product extension. The present framework
paper does not prove a no-go theorem; it states the conjecture
cleanly, identifies what would be required to establish it
rigorously (supplied in part by the second paper), and provides
the algebraic and modular-theoretic substrate that the third
through sixth papers operate on.

\subsection{What is new here}
\label{sec:new}

The technical ingredients are individually standard: the Weyl
C*-algebra and its representation
theory~\cite{BratteliRobinson1997,Folland1989}, the Stone--von
Neumann theorem~\cite{Stone1930,vonNeumann1931}, the Haag--Kastler
axioms~\cite{HaagKastler1964,Haag1992,Buchholz1998,Brunetti2003,
Hollands2018,Fewster2019}, Hegerfeldt's instantaneous spreading
theorem~\cite{Hegerfeldt1974,Hegerfeldt1985,Hegerfeldt1998}, the
Bargmann central extension and its mass-superselection
rule~\cite{Bargmann1954,LevyLeblond1971,WickWightmanWigner1952}, the
canonical quantization of the free electromagnetic
field~\cite{Jackson1999,BornWolf1999}, Wigner's
classification~\cite{Wigner1939,BargmannWigner1948},
Tomita--Takesaki modular theory~\cite{Takesaki1970,
BratteliRobinson1997}, the Bisognano--Wichmann
theorem~\cite{BisognanoWichmann1975,BisognanoWichmann1976}, the
Unruh effect~\cite{Unruh1976}, and the Connes--Haagerup
type-$\mathrm{III}_1$ universality of local
algebras~\cite{Connes1973,Connes1976,Haagerup1987,
BuchholzDAntoniFredenhagen1987}.

The contribution of the paper is the conjunction. We present the
photon-sector realization as a clean derivation chain---one quantum
postulate (the canonical commutator with scale $\hbar$), classical
Fourier-Maxwell scaffold, three theorems
(Sec.~\ref{sec:photon})---and articulate explicitly the
complementary structural roles of $c$ and $\hbar$ in the architecture
($c$ within each Fourier-conjugate space, $\hbar$ between them;
Sec.~\ref{sec:c_hbar}). On the basis of these, we formulate the
SR-selection conjecture (Sec.~\ref{sec:conjecture}) with explicit
hypotheses, identify three strands of supporting evidence, and
flag the third---Reeh--Schlieder failure on Galilean Haag--Kastler
nets---as the load-bearing rigorous strand established as a precise
no-go theorem in the second paper of the
series~\cite{Pachon2026b}. Finally we collect the
modular-theoretic substrate (Sec.~\ref{sec:modular_geometry}) on
which the curved-background, dynamical-metric, and crossed-product
extensions of the third through sixth
papers~\cite{Pachon2026c,Pachon2026d,Pachon2026e,Pachon2026f}
will operate. We make no claim of new empirical predictions: the
framework reproduces standard relativistic QFT in the photon sector
and is consistent with standard relativistic QFT more broadly.

\subsection{Plan of the paper}
\label{sec:plan}

Sec.~\ref{sec:algebraic} fixes the operator-algebraic framework for
non-relativistic quantum mechanics. Sec.~\ref{sec:photon} develops
the photon-sector realization. Sec.~\ref{sec:c_hbar} articulates the
complementary structural roles of $c$ and $\hbar$ that the photon
sector makes visible. Sec.~\ref{sec:haag_kastler} formulates the
Haag--Kastler axiom set and identifies the two canonical-commutator
forms that will enter the conjecture's hypotheses.
Sec.~\ref{sec:conjecture} states the SR-selection conjecture
together with the three strands of its supporting evidence, and
discusses its scope. Sec.~\ref{sec:modular_geometry} collects the
modular-theoretic content needed by the third through sixth papers
of the series. Sec.~\ref{sec:conclusion} closes with a roadmap to
the remainder of the series and an explicit demarcation of what the
present paper does and does not establish.

\section{Algebraic preliminaries}\label{sec:algebraic}

This section is preliminary: it fixes notation and recalls the
operator-algebraic facts about non-relativistic quantum mechanics
that the rest of the paper relies on. Readers familiar with the
operator-algebraic formulation may skip to
Sec.~\ref{sec:photon}, returning here only for notation. The
material is standard~\cite{Segal1947,BratteliRobinson1997,Haag1992,
Petz1990,Folland1989}; the only feature we emphasize beyond the
standard textbook treatment is the explicit parametric role of
$\hbar$ in the Weyl relations.

\subsection{The Weyl C*-algebra over a symplectic phase space}
\label{sec:weyl}

Let $(V,\sigma)$ be a real symplectic vector space of finite
dimension $2n$, with symplectic form $\sigma:V\times V\to\R$. The
classical phase space of a non-relativistic particle in $\R^n$ is
$V=\R^{2n}$ with $\sigma((q,p),(q',p'))=q\cdot p'-q'\cdot p$, and the
classical Poisson bracket reads $\{q^i,p_j\}=\delta^i_j$.

\begin{definition}[Weyl C*-algebra]\label{def:weyl}
The Weyl C*-algebra $\mathcal{W}_\hbar(V,\sigma)$ is the universal
C*-algebra generated by elements $\{W(\xi):\xi\in V\}$ subject to
the Weyl relations
\begin{equation}\label{eq:weyl_relations}
W(\xi)\,W(\eta)
=\rme^{-\rmi\hbar\sigma(\xi,\eta)/2}\,W(\xi+\eta),
\qquad
W(\xi)^*=W(-\xi),
\qquad
W(0)=\one,
\end{equation}
with the canonical C*-norm. The constant $\hbar>0$ appears as the
single scalar parameter of the algebra.
\end{definition}

The relations~\eqref{eq:weyl_relations} encode the canonical
commutation relations $[\hat{q}^i,\hat{p}_j]=\rmi\hbar\,
\delta^i_j\,\one$ in exponentiated form: writing
$W(a,b)=\exp(\rmi(a\cdot\hat{q}+b\cdot\hat{p})/\hbar)$
formally, \eqref{eq:weyl_relations} follows from the
Baker--Campbell--Hausdorff identity. The bounded form is preferred
because the unbounded $\hat{q}^i,\hat{p}_j$ have domain issues that
the Weyl algebra avoids by construction. Setting $\hbar\to 0$
in~\eqref{eq:weyl_relations} returns the abelian algebra of
characters on $V$, and $\mathcal{W}_\hbar$ is therefore a strict
deformation quantization of the classical algebra in the parameter
$\hbar$~\cite{Rieffel1989,Landsman1998}.

\begin{remark}[$\hbar$ as deformation parameter]
\label{rem:hbar_explicit}
We retain $\hbar$ explicitly throughout this paper. Setting
$\hbar=1$ removes a unit but obscures the structural fact that the
entire quantum step is governed by a single scalar---a feature we
will exploit when we identify $\hbar$ in
Sec.~\ref{sec:c_hbar} as the conversion factor between kinematic
phase rates and dynamical observables.
\end{remark}

\subsection{States, GNS, and regular representations}
\label{sec:gns}

A \emph{state} on $\mathcal{W}_\hbar(V,\sigma)$ is a positive linear
functional $\omega:\mathcal{W}_\hbar\to\C$ with $\omega(\one)=1$.
Each state generates, by the Gelfand--Naimark--Segal
construction~\cite{Segal1947,BratteliRobinson1997}, a Hilbert-space
representation: there exists a triple
$(\pi_\omega,\Hil_\omega,\Omega_\omega)$, unique up to unitary
equivalence, with $\pi_\omega:\mathcal{W}_\hbar\to
\mathcal{B}(\Hil_\omega)$ a $\ast$-representation and
$\Omega_\omega\in\Hil_\omega$ a cyclic unit vector implementing
$\omega$.

A representation $\pi:\mathcal{W}_\hbar\to\mathcal{B}(\Hil)$ is
\emph{regular} if for each $\xi\in V$ the one-parameter unitary
group $t\mapsto\pi(W(t\xi))$ is strongly continuous in $t$. By
Stone's theorem, regularity is equivalent to the existence of
self-adjoint generators $\hat{R}(\xi)$ on a common dense domain such
that $\pi(W(t\xi))=\exp(\rmi{}t\,\hat{R}(\xi)/\hbar)$, with
$\hat{R}(\xi)=a\cdot\hat{q}+b\cdot\hat{p}$ for $\xi=(a,b)$.

\subsection{The Stone--von Neumann theorem}\label{sec:svn}

The role of regular representations is fixed by the following
classical result.

\begin{theorem}[Stone--von Neumann~\cite{Stone1930,vonNeumann1931}]
\label{thm:svn}
Let $\mathcal{W}_\hbar(V,\sigma)$ be the Weyl C*-algebra over a
finite-dimensional symplectic vector space $(V,\sigma)$. Any two
irreducible regular representations of $\mathcal{W}_\hbar$ on
separable Hilbert spaces are unitarily equivalent.
\end{theorem}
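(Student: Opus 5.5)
We follow the standard route via the Weyl transform and the "Gaussian projection" (von Neumann's argument). First reduce to canonical form: by the linear Darboux theorem, pick a symplectic basis of $(V,\sigma)$ so that $V\cong\R^{2n}$ with $\sigma((a,b),(a',b'))=a\cdot b'-a'\cdot b$. A symplectic linear isomorphism induces a $*$-isomorphism of $\mathcal{W}_\hbar(V,\sigma)$ preserving regularity, so it suffices to treat this case.

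Given an irreducible regular representation $\pi$ on a separable $\Hil$, we integrate the Weyl operators against $L^1$ functions. For $f\in L^1(\R^{2n})$ set
\begin{equation*}
\pi(f)=\int_{\R^{2n}} f(\xi)\,\pi(W(\xi))\,\rmd\xi ,
\end{equation*}
defined as a weak (Bochner) integral. Regularity makes $\xi\mapsto\pi(W(\xi))$ strongly continuous, since by the Weyl relations it factors into products of one-parameter groups up to continuous phases. From \eqref{eq:weyl_relations} one gets the twisted-convolution rule $\pi(f)\pi(g)=\pi(f\natural g)$ and $\pi(f)^*=\pi(f^*)$. A key lemma is that $f\mapsto\pi(f)$ is injective: $\pi(f)=0$ forces $\pi(W(\eta))^*\pi(f)\pi(W(\eta))=0$ for all $\eta$, which gives vanishing of a symplectic Fourier transform of $f$, hence $f=0$.

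The heart of the proof is the Gaussian $f_0(\xi)=c\,\rme^{-|\xi|^2/(4\hbar)}$ with suitable normalization $c$ (the Fock vacuum symbol). A direct Gaussian computation gives $f_0\natural f_0=f_0$, $f_0^*=f_0$, and $f_0\natural\delta_\xi\natural f_0=\rme^{-|\xi|^2/(4\hbar)}f_0$. Hence $P=\pi(f_0)$ is a nonzero (by injectivity) orthogonal projection with $P\pi(W(\xi))P=\rme^{-|\xi|^2/(4\hbar)}P$. Consequently, for unit vectors $u,v\in P\Hil$,
\begin{equation*}
\langle \pi(W(\xi))u,\pi(W(\eta))v\rangle=\rme^{\rmi\hbar\sigma(\xi,\eta)/2}\,\rme^{-|\eta-\xi|^2/(4\hbar)}\langle u,v\rangle ,
\end{equation*}
up to the precise sign convention in the phase. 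So the closed span $\Hil_u$ of $\{\pi(W(\xi))u\}$ carries a subrepresentation whose matrix coefficients are independent of $u$, and orthogonal $u,v\in P\Hil$ give orthogonal invariant subspaces. Irreducibility then forces $\dim P\Hil=1$ and $\Hil=\Hil_{u_0}$, with $u_0$ a cyclic vector whose state is the fixed Gaussian (vacuum) state $\omega_0(W(\xi))=\rme^{-|\xi|^2/(4\hbar)}$.

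Finally, for two such representations $\pi_1,\pi_2$ with cyclic vectors $u_1,u_2$, we define $U\pi_1(W(\xi))u_1=\pi_2(W(\xi))u_2$ on finite linear combinations. It is isometric because the inner products coincide by the formula above, so it extends to a unitary intertwiner; this is uniqueness of the GNS representation of $\omega_0$. Separability is not needed beyond ensuring measurability and the Bochner integrals.

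The main obstacle is the Gaussian calculus: verifying $f_0\natural f_0=f_0$ and the compression identity with the correct $\hbar$-dependent normalization and phase conventions of \eqref{eq:weyl_relations}. The injectivity lemma, which guarantees $P\neq0$, is the other delicate point. For the Gaussian identities, the first thing to try is completing the square in the $2n$-dimensional Gaussian integrals, where the twisting phase combines with the Gaussian exponent.
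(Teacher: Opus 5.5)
The paper gives no proof of Theorem~\ref{thm:svn}, only the citation. Your argument is correct: it is von Neumann's 1931 Gaussian-projection proof from the cited reference. Separability is in fact superfluous even for the integrals, because regularity makes $\xi\mapsto\langle u,\pi(W(\xi))v\rangle$ continuous and $\pi(f)$ can be defined weakly.

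The one slip is in the $\hbar$-bookkeeping. With \eqref{eq:weyl_relations} as written (phase $\rme^{-\rmi\hbar\sigma/2}$), the idempotent is $f_0(\xi)=(\hbar/2\pi)^n\rme^{-\hbar|\xi|^2/4}$, with $P\pi(W(\xi))P=\rme^{-\hbar|\xi|^2/4}P$. Your $\rme^{-|\xi|^2/(4\hbar)}$ (with $c=(2\pi\hbar)^{-n}$) instead belongs to the phase $\rme^{-\rmi\sigma/(2\hbar)}$ that the paper's formal $W(a,b)=\exp(\rmi(a\cdot\hat{q}+b\cdot\hat{p})/\hbar)$ actually produces. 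Your matrix-coefficient display therefore mixes the two conventions and is consistent only at $\hbar=1$; pick one convention for both the phase and the Gaussian.
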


The theorem says: among irreducible regular representations of
$\mathcal{W}_\hbar$, there is a unique one up to unitary
equivalence. The Schr\"odinger representation, the momentum
representation, and the Bargmann--Fock representation are unitarily
equivalent realizations of the same abstract irreducible regular
representation. The four hypotheses (finite dimension, separability,
regularity, irreducibility) all matter: each can fail in physically
relevant settings, with the failures encoding genuinely physical
content not present in finite-dimensional regular non-relativistic
quantum mechanics. We shall encounter the failure of finite
dimension in Sec.~\ref{sec:photon} (where the symplectic space of
field modes is infinite-dimensional, so Stone--von Neumann does not
apply, and unitarily inequivalent representations encode physically
distinct content like KMS thermal states and the Unruh vacuum).

\subsection{The Galilei group as automorphisms}
\label{sec:galilean_covariance}

The Galilei group $\mathcal{G}$ in $3+1$ dimensions is the
ten-parameter group generated by spatial translations
$\mathbf{a}\in\R^3$, time translations $\tau\in\R$, spatial rotations
$R\in SO(3)$, and Galilean boosts $\mathbf{v}\in\R^3$. For a single
particle of mass $m$, the Galilei group acts on classical phase
space $V=\R^{2n}$ by symplectomorphisms; explicitly, the action of
$g=(\mathbf{a},\mathbf{v},R,\tau)$ on a phase-space point
$(\mathbf{q},\mathbf{p})$ at instantaneous time $t$ is
\begin{equation}\label{eq:galilei_phase}
(\mathbf{q},\mathbf{p})\xmapsto{g}
\bigl(R\mathbf{q}+\mathbf{v}(t-\tau)+\mathbf{a},\,
       R\mathbf{p}+m\mathbf{v}\bigr),
\end{equation}
preserving the symplectic form $\rmd\mathbf{q}\wedge \rmd\mathbf{p}$. Each
group element thereby lifts (via the Weyl correspondence) to an
automorphism $\alpha_g$ of $\mathcal{W}_\hbar(V,\sigma)$: this is
\emph{Galilean covariance as automorphisms}. The framework
distinguishes three structural ingredients of non-relativistic
quantum mechanics:
\begin{itemize}
\item the Weyl algebra $\mathcal{W}_\hbar$ as the abstract object
encoding the canonical commutator,
\item a regular representation $\pi$ realizing the algebra on a
Hilbert space,
\item the Galilei group acting as automorphisms, with each group
element implemented by a unitary $U(g)$ on the GNS Hilbert space.
\end{itemize}
Stone--von Neumann uniqueness ensures that the unitary
representation of $\mathcal{G}$ is well-defined up to unitary
equivalence on the GNS Hilbert space; covariance constrains physical
content but is logically distinct from the canonical-commutator
content. The explicit separation of (i)--(iii) is essential for the
present series: it allows the question \emph{which kinematic groups
are compatible with the algebraic content of the canonical
commutator strengthened to a Haag--Kastler net?} to be posed
non-tautologically.

\begin{remark}[Bargmann central extension]
\label{rem:bargmann}
A unitary projective representation of $\mathcal{G}$ on a Hilbert
space corresponds to a unitary linear representation of the
Bargmann central extension
$\widetilde{\mathcal{G}}$~\cite{Bargmann1954,LevyLeblond1971}, with
the central charge identified as the mass operator $\hat{M}$. In any
representation in which $\hat{M}$ has discrete spectrum, the Hilbert
space decomposes as a direct sum
$\Hil=\bigoplus_M\Hil_M$ of mass eigenspaces, mutually
superselected---the \emph{Bargmann mass superselection
rule}~\cite{WickWightmanWigner1952,Bargmann1954}. The Bargmann
$[\hat{K}_i,\hat{P}_j]=\rmi\delta_{ij}\hat{M}$ commutator,
where the central $\hat{M}$ stands in place of the Poincar\'e
$\rmi\delta_{ij}\hat{H}/c^2$, will turn out to be the
algebraic-kinematic root of the Galilean obstruction we exploit in
Sec.~\ref{sec:conjecture} and the companion
paper~\cite{Pachon2026b}.
\end{remark}

\section{The photon sector as a transparent realization}
\label{sec:photon}

We turn from the algebraic framework to a concrete realization of
it: the photon sector of free quantum electrodynamics. The photon
sector is the simplest non-trivial case in $3+1$ dimensions in which
the full content of the algebraic architecture is instantiated, and
it admits an unusually transparent presentation through the
classical Fourier analysis of Maxwell's equations. The construction
will arrive at single-photon QED from classical electromagnetism
plus a single quantum postulate---the canonical commutator with
scale $\hbar$ on the algebra of mode amplitudes---and will exhibit
photon indivisibility, the Planck relation $E=\hbar\omega$, and the
spin spectrum $\pm\hbar$ along propagation as theorems of one
operator algebra.

The presentation makes the parametric role of $\hbar$ pointed: the
construction has \emph{exactly one} free parameter in the quantum
step, and that parameter is $\hbar$. This is the structural feature
we exploit in Sec.~\ref{sec:c_hbar} when we identify $\hbar$ as the
conversion factor between kinematic phase rates and dynamical
observables.

\subsection{Classical Fourier-Maxwell theory in dual space}
\label{sec:fourier_maxwell}

We begin with the classical theory. The free electromagnetic field
in Lorenz gauge satisfies the wave equation
\begin{equation}\label{eq:wave_eq}
\Box A^\mu(x)=0,\qquad \partial_\mu A^\mu=0,
\end{equation}
where $\Box=\partial_t^2/c^2-\nabla^2$. The Lorenz gauge condition
removes one of the four components of $A^\mu$ as redundant, and
residual gauge freedom (gauge transformations
$A^\mu\to A^\mu+\partial^\mu\Lambda$ with $\Box\Lambda=0$) removes
a second, leaving two physical degrees of freedom corresponding to
the two transverse polarizations.

Fourier-transforming~\eqref{eq:wave_eq},
\begin{equation}
A^\mu(x)=\int\frac{\rmd^4 k}{(2\pi)^4}\,\tilde{A}^\mu(k)\,
\rme^{-\rmi{}k\cdot x},
\qquad k^\mu=(\omega/c,\mathbf{k}),
\end{equation}
the wave equation becomes the algebraic constraint
\begin{equation}\label{eq:photon_dispersion_alg}
k^2\,\tilde{A}^\mu(k)=0,\qquad k^2=k_\mu k^\mu=\omega^2/c^2-|\mathbf{k}|^2,
\end{equation}
selecting the null cone $k^2=0$ in dual $k$-space. Nontrivial
solutions live on this hypersurface, giving the dispersion relation
\begin{equation}\label{eq:dispersion}
\omega_\mathbf{k}=c|\mathbf{k}|.
\end{equation}
The structural payoff of moving to dual space is that \emph{the wave
equation as a partial differential equation in spacetime becomes an
algebraic constraint selecting a hypersurface in $k$-space.} The
constant $c$ enters through the Minkowski metric defining $k^2$,
equivalently through the relation between $\omega$ and $|\mathbf{k}|$
on the null cone.

\paragraph{The two null cones.}
The null cone $k^2=0$ in $k$-space is the Fourier dual of the
spacetime light cone $s^2=c^2 t^2-|\mathbf{x}|^2=0$. Both are defined
by the same Minkowski metric. \emph{The constant $c$ acts within
each Fourier-conjugate space}, defining the null hypersurface in
each. We will return to this geometric reading in
Sec.~\ref{sec:c_hbar}.

\paragraph{Mode expansion.}
The general classical solution of~\eqref{eq:wave_eq} on the null
cone is
\begin{equation}\label{eq:photon_mode_expansion}
A^\mu(x)=\!\int\!\frac{\rmd^3k}{(2\pi)^3\sqrt{2\omega_\mathbf{k}}}
\sum_{\sigma=\pm}\bigl[a_{\mathbf{k},\sigma}\,
\epsilon^\mu_{\mathbf{k},\sigma}\,
\rme^{-\rmi{}k\cdot x}+
a^*_{\mathbf{k},\sigma}\,
\epsilon^{\mu*}_{\mathbf{k},\sigma}\,
\rme^{+\rmi{}k\cdot x}\bigr],
\end{equation}
with $k^0=\omega_\mathbf{k}/c=|\mathbf{k}|$ on shell. The
polarization four-vectors $\epsilon^\mu_{\mathbf{k},\sigma}$ for
$\sigma=\pm 1$ are the two physical helicity states transverse to
$\mathbf{k}$. The complex coefficients $a_{\mathbf{k},\sigma}$ are
the classical mode amplitudes, the dynamical degrees of freedom of
the classical theory.

\paragraph{Polarization sector: the helicity-$\pm 1$ Hilbert space.}
At each fixed $\mathbf{k}$, the polarization sector is a
two-dimensional complex vector space spanned by
$\epsilon^\mu_{\mathbf{k},\pm}$, with helicity basis
\begin{equation}\label{eq:helicity_basis}
\ket{R}=\frac{1}{\sqrt{2}}\bigl(\ket{H}-\rmi\ket{V}\bigr),
\qquad
\ket{L}=\frac{1}{\sqrt{2}}\bigl(\ket{H}+\rmi\ket{V}\bigr).
\end{equation}
This is mathematically the same $\C^2$ as the Jones
calculus~\cite{Jones1941,BornWolf1999} of polarization optics.
The two-dimensional complex Hilbert space at each $\mathbf{k}$ is
\emph{purely classical}: orthonormal bases, superposition,
projection operators, and non-commutativity of projectors at
different angles are all features of classical electromagnetism in
its Fourier representation. The qubit, in this sense, exists in
classical optics before any quantization is performed.

\subsection{The classical Hilbert-space scaffold}
\label{sec:classical_scaffold}

The classical Fourier analysis supplies, before any quantum input
is introduced, a complex Hilbert-space and dynamical scaffold. We
catalog its elements explicitly.

\begin{enumerate}[label=(C\arabic*)]
\item \textit{Complex inner-product space} of mode amplitudes
$\{a_{\mathbf{k},\sigma}\}$ with the Klein--Gordon inner product
\begin{equation}\label{eq:KG_inner}
(A,A')_{\text{KG}}=\int\frac{\rmd^3k}{(2\pi)^3}\sum_\sigma
a^*_{\mathbf{k},\sigma}\,a'_{\mathbf{k},\sigma},
\end{equation}
which is positive-definite on positive-frequency solutions and
preserved by the dynamics.

\item \textit{Two-dimensional polarization Hilbert space} $\C^2$ at
each $\mathbf{k}$, identical as a linear-algebraic object to the
Jones calculus of Sec.~\ref{sec:fourier_maxwell}.

\item \textit{Symplectic structure.} Writing
$a_{\mathbf{k},\sigma}=(q_{\mathbf{k},\sigma}+\rmi{}p_{\mathbf{k},\sigma})/\sqrt{2}$
with $q,p$ real, the canonical Poisson bracket
\begin{equation}\label{eq:photon_poisson}
\{q_{\mathbf{k},\sigma},p_{\mathbf{k}',\sigma'}\}=
\delta_{\sigma\sigma'}\,\delta(\mathbf{k}-\mathbf{k}')
\end{equation}
holds, equivalent to
$\{a_{\mathbf{k},\sigma},a^*_{\mathbf{k}',\sigma'}\}=
-\rmi\,\delta_{\sigma\sigma'}\,\delta(\mathbf{k}-\mathbf{k}')$.

\item \textit{Classical Hamiltonian.} Substituting the mode
expansion~\eqref{eq:photon_mode_expansion} into the standard
electromagnetic Hamiltonian density
$\mathcal{H}=\tfrac{1}{2}(\epsilon_0|\mathbf{E}|^2+|\mathbf{B}|^2/\mu_0)$
and integrating yields, after using~\eqref{eq:dispersion} and
orthonormality of polarizations,
\begin{equation}\label{eq:H_class}
H_{\text{class}}=\int\frac{\rmd^3k}{(2\pi)^3}\sum_\sigma
\omega_\mathbf{k}\,|a_{\mathbf{k},\sigma}|^2,
\end{equation}
a continuous collection of independent harmonic oscillators of
frequencies $\omega_\mathbf{k}=c|\mathbf{k}|$.

\item \textit{Schr\"odinger-form equation of motion.} Hamilton's
equations applied to~\eqref{eq:H_class} give
\begin{equation}\label{eq:photon_schrodinger_classical}
\rmi\dot{a}_{\mathbf{k},\sigma}(t)=
\omega_\mathbf{k}\,a_{\mathbf{k},\sigma}(t),
\end{equation}
solved by $a_{\mathbf{k},\sigma}(t)=
\rme^{-\rmi\omega_\mathbf{k}t}\,a_{\mathbf{k},\sigma}(0)$.
The Schr\"odinger-form equation~\eqref{eq:photon_schrodinger_classical}
contains no $\hbar$: the Schr\"odinger form is a property of the
classical Maxwell field in its Fourier representation, not a quantum
result.

\item \textit{Generator of rotations.} The classical electromagnetic
field carries angular momentum~\cite{Jackson1999,Beth1936}. For a
plane-wave packet propagating along $\hat{\mathbf{k}}$, after gauge
fixing in the Coulomb (transverse) gauge, the spin component along
the propagation direction
is~\cite{AllenBeijersbergen1992,BliokhNori2015}
\begin{equation}\label{eq:S_class}
S_z^{\text{class}}=\int\frac{\rmd^3k}{(2\pi)^3}\sum_\sigma
\sigma\,|a_{\mathbf{k},\sigma}|^2,
\end{equation}
with $\sigma=\pm 1$ the helicity. The integrand
in~\eqref{eq:S_class} is dimensionless when $|a|^2$ is an occupation
density: the helicity index $\sigma$ counts circulation rate but
does not by itself carry units of angular momentum.

\item \textit{Fourier bandwidth--duration inequality.} For any
square-integrable mode envelope $f(t)$ with Fourier transform
$\tilde{f}(\omega)$,
\begin{equation}\label{eq:bandwidth_duration}
\Delta t\cdot\Delta\omega\geq\frac{1}{2},
\end{equation}
where $\Delta t,\Delta\omega$ are RMS spreads. This is a theorem of
classical Fourier analysis~\cite{BornWolf1999}; it contains no
$\hbar$. It is the classical precursor of the Heisenberg
uncertainty $\Delta t\,\Delta E\geq\hbar/2$.
\end{enumerate}

\paragraph{What the classical scaffold supplies and what it does not.}
This is most of the formal apparatus of single-mode quantum
mechanics, present classically in Maxwell's theory before $\hbar$
is introduced. What is missing, and what classical Fourier analysis
cannot supply by itself, is:

\begin{enumerate}[label=(M\arabic*)]
\item Discreteness of the spectrum of $|a|^2$. Classically,
$|a_{\mathbf{k},\sigma}|^2$ is a continuous non-negative real number;
there is no constraint that it take integer values.
\item Non-commutativity of the algebra of mode amplitudes.
Classically, $a$ and $a^*$ are commuting complex variables.
\item Conversion of the dimensionless helicity rate $\sigma$ into a
physical angular momentum, which requires a dimensional constant
with units of action.
\item Conversion of the bandwidth-duration
inequality~\eqref{eq:bandwidth_duration} into the Heisenberg
uncertainty $\Delta t\,\Delta E\geq\hbar/2$, requiring the same
dimensional constant.
\item The Born rule: classical mode amplitudes are deterministic
field strengths, not probability amplitudes.
\end{enumerate}

The single quantum postulate of the next subsection closes (M1) and
(M2) directly. Closure of (M3) follows as a theorem
(Sec.~\ref{sec:photon_theorems}); closure of (M4) follows by
dimensional conversion via the Planck relation $E=\hbar\omega$
applied to spreads. The Born rule (M5) is adopted separately
(Sec.~\ref{sec:photon_born_wigner}).

\subsection{The canonical commutator: the single quantum postulate}
\label{sec:photon_canonical}

Promote the classical mode amplitudes
$a_{\mathbf{k},\sigma},a^*_{\mathbf{k},\sigma}$ to operators
$\hat{a}_{\mathbf{k},\sigma},\hat{a}^\dagger_{\mathbf{k},\sigma}$ on
a Hilbert space, satisfying the canonical commutation relations
\begin{equation}\label{eq:photon_canonical_commutator}
[\hat{a}_{\mathbf{k},\sigma},
\hat{a}^\dagger_{\mathbf{k}',\sigma'}]=
\hbar\,\delta_{\sigma\sigma'}\,\delta(\mathbf{k}-\mathbf{k}')\,\one,
\qquad
[\hat{a}_{\mathbf{k},\sigma},
\hat{a}_{\mathbf{k}',\sigma'}]=0.
\end{equation}
Equation~\eqref{eq:photon_canonical_commutator} is the
infinite-dimensional realization of the canonical commutator
$[\hat{q},\hat{p}]=\rmi\hbar\,\one$ from
Sec.~\ref{sec:weyl}: a Weyl-type algebra with one scalar parameter,
$\hbar$. The form of the commutator is forced by the classical
symplectic structure~\eqref{eq:photon_poisson}; only the scalar
$\hbar$ is free.

\begin{remark}[Convention for placing $\hbar$ in the commutator]
\label{rem:hbar_convention}
We have placed $\hbar$ on the right-hand side of the
$[\hat{a},\hat{a}^\dagger]$ commutator. An alternative convention
absorbs $\hbar$ into operator normalization
$\tilde{a}=\hat{a}/\sqrt{\hbar}$, after which the commutator reads
$[\tilde{a},\tilde{a}^\dagger]=\delta\,\one$ and $\hbar$ is moved to
the energy normalization $\hat{H}=\hbar\omega(\tilde{a}^\dagger
\tilde{a}+\tfrac12\one)$. The two conventions are unitarily
equivalent and give identical physics. We retain the placement
in~\eqref{eq:photon_canonical_commutator} because it makes the
convention-independent fact---that the entire quantum step is
governed by a single scalar---visible at the level of the operator
algebra.
\end{remark}

\paragraph{Stone--von Neumann fails---and that matters.}
The symplectic space of Maxwell mode amplitudes is
infinite-dimensional, so Theorem~\ref{thm:svn} does not apply.
Different choices of state on the underlying Weyl algebra give
unitarily inequivalent representations of the canonical commutation
relations~\eqref{eq:photon_canonical_commutator}. For free
electromagnetism on Minkowski spacetime, the natural choice is the
Fock vacuum
\begin{equation}\label{eq:fock_vacuum}
\hat{a}_{\mathbf{k},\sigma}\,\Omega_0=0
\qquad\text{for all }(\mathbf{k},\sigma),
\end{equation}
which gives the standard representation in
which~\eqref{eq:photon_canonical_commutator} holds. Other choices
are physically inequivalent and physically meaningful: KMS thermal
states at inverse temperature $\beta$, the Unruh vacuum
of~\cite{Unruh1976} accessible to uniformly accelerated observers,
and the Hawking states~\cite{Hawking1975} on black-hole spacetimes.
The failure of Stone--von Neumann here is structural: it is the
algebraic recognition that QFT carries more physical content than
finite-dimensional QM.

\subsection{Theorems: integer spectrum, Planck relation, and spin}
\label{sec:photon_theorems}

We derive the principal physical content of the photon construction
from the canonical commutator~\eqref{eq:photon_canonical_commutator}
combined with the classical scaffold of
Sec.~\ref{sec:classical_scaffold}.

\subsubsection{Integer-valued number spectrum}

For each mode $(\mathbf{k},\sigma)$, define the (per-mode) number
operator
\begin{equation}\label{eq:N_def}
\hat{N}_{\mathbf{k},\sigma}=
\frac{\hat{a}^\dagger_{\mathbf{k},\sigma}\,
\hat{a}_{\mathbf{k},\sigma}}{\hbar}.
\end{equation}

\begin{remark}[Regularization of $\hat{N}$]\label{rem:N_regularization}
Equation~\eqref{eq:N_def}, taken literally on continuum modes,
involves the ill-defined product
$\hat{a}^\dagger_{\mathbf{k}}\hat{a}_{\mathbf{k}}$ at coincident
momentum, which from~\eqref{eq:photon_canonical_commutator} differs
from $\hat{a}_{\mathbf{k}}\hat{a}^\dagger_{\mathbf{k}}$ by the
divergent $\hbar\delta(\mathbf{0})$. The ladder argument that
follows is rigorous in two standard regularizations: (i)
\emph{box-quantization} on a finite spatial volume $V$, which
discretizes momentum and replaces
$\delta(\mathbf{k}-\mathbf{k}')$ by
$V\delta_{\mathbf{k},\mathbf{k}'}$; and (ii) \emph{smearing}, in
which one works with smeared mode operators
$\hat{a}(f)=\int\frac{\rmd^3k}{(2\pi)^3}f(\mathbf{k})
\hat{a}_{\mathbf{k}}$ for test functions $f$ and defines smeared
number operators $\hat{N}(f)$. In both cases the integer spectrum
follows by the same ladder argument. We adopt the continuum shorthand
without further notation.
\end{remark}

\begin{theorem}[Integer-valued number spectrum]\label{thm:integer_N}
The operator $\hat{N}_{\mathbf{k},\sigma}$ defined
in~\eqref{eq:N_def} is non-negative and self-adjoint on its natural
domain, and its spectrum is
\begin{equation}\label{eq:N_spectrum_photon}
\spec\,\hat{N}_{\mathbf{k},\sigma}=\{0,1,2,\ldots\}.
\end{equation}
\end{theorem}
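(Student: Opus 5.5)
We work in the box-quantized regularization of Remark~\ref{rem:N_regularization}, where for a fixed mode the commutator~\eqref{eq:photon_canonical_commutator} reduces, after absorbing the volume factor, to a single-mode relation $[\hat a,\hat a^\dagger]=\hbar\,\one$. Setting $\tilde a=\hat a/\sqrt{\hbar}$ as in Remark~\ref{rem:hbar_convention} gives $[\tilde a,\tilde a^\dagger]=\one$ and $\hat N=\tilde a^\dagger\tilde a$. For the smeared version we take $f$ with $\int\frac{\rmd^3k}{(2\pi)^3}|f|^2=1$, so the same normalized relation holds for $\hat a(f)$; the argument below is then identical.

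\emph{Self-adjointness and non-negativity.} We realize $\tilde a$ as a closed operator via the regular (Fock) representation generated by $\Omega_0$ of~\eqref{eq:fock_vacuum}. Equivalently, for one mode this is a regular irreducible representation of a finite-dimensional Weyl algebra, so Theorem~\ref{thm:svn} identifies it with the Schr\"odinger representation on $L^2(\R)$, with $\tilde a=(x+\partial_x)/\sqrt2$. Von Neumann's theorem on $T^*T$ for closed densely defined $T$ then makes $\hat N=\tilde a^\dagger\tilde a$ self-adjoint on its natural domain. Non-negativity follows from $\langle\psi,\hat N\psi\rangle=\|\tilde a\psi\|^2\ge0$.

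\emph{Ladder argument.} From the commutator we get $[\hat N,\tilde a]=-\tilde a$ and $[\hat N,\tilde a^\dagger]=\tilde a^\dagger$. Hence if $\hat N\psi=\lambda\psi$, then $\tilde a\psi$ is either zero or an eigenvector with eigenvalue $\lambda-1$, and $\|\tilde a\psi\|^2=\lambda\|\psi\|^2$. Iterating the lowering map, non-negativity forces the chain to terminate, which happens only when some $\lambda-n=0$. So every eigenvalue lies in $\{0,1,2,\ldots\}$. Conversely, $\Omega_0$ has eigenvalue $0$, and $(\tilde a^\dagger)^n\Omega_0$ has eigenvalue $n$ with squared norm $n!\neq0$. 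Thus the point spectrum is exactly $\mathbb{N}_0$.

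\emph{Main obstacle: excluding continuous spectrum.} The ladder argument only controls eigenvalues. To conclude $\spec\hat N=\mathbb{N}_0$ we must show the vectors $e_n=(n!)^{-1/2}(\tilde a^\dagger)^n\Omega_0$ span a dense subspace. In the Fock representation this is cyclicity of $\Omega_0$. Alternatively, via the Schr\"odinger picture of Theorem~\ref{thm:svn}, the $e_n$ are the Hermite functions, which form a complete orthonormal basis of $L^2(\R)$. With completeness in hand, $\hat N$ is diagonal in an orthonormal basis with eigenvalues $n$, so its spectrum is the closure of $\{0,1,2,\ldots\}$, which is the set itself. This completes the proof.
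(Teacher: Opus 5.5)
Your proof follows the paper's route at its core: box regularization, non-negativity from $\langle\psi,\hat N\psi\rangle=\|\tilde a\psi\|^2\geq 0$, and the same ladder computation $[\hat N,\tilde a]=-\tilde a$. That computation forces every eigenvalue into $\{0,1,2,\ldots\}$, and the raising chain from a ground state realizes each integer. You go beyond the paper's sketch at the two places where it is thin.

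\emph{Self-adjointness.} The paper appeals only to ``the form $\hat a^\dagger\hat a$ on a dense common domain,'' which by itself gives only a symmetric non-negative operator. Your use of von Neumann's theorem on $T^*T$ for the closed operator $\tilde a$ actually delivers self-adjointness on the natural domain.

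\emph{Continuous spectrum.} This is the more substantive difference. The paper's ladder argument controls only the point spectrum, and it tacitly assumes some eigenvector exists. As written, it shows the set of eigenvalues is $\{0,1,2,\ldots\}$, not that $\spec\hat N_{\mathbf{k},\sigma}$ is. Your completeness step, via the Fock vacuum or the Hermite basis, is what excludes continuous spectrum, and the statement requires that. In exchange, the paper's version is representation-independent: its eigenvalue analysis needs no irreducibility or cyclicity, only the commutator acting on eigenvectors.

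That last point is where your argument needs one repair. On the full box-quantized photon Fock space, $\Omega_0$ is cyclic for the CCR algebra of \emph{all} modes. It is not cyclic for polynomials in the single operator $\tilde a^\dagger=\hat a^\dagger_{\mathbf{k},\sigma}/\sqrt{\hbar}$. So the vectors $e_n$ span only the one-mode subspace, not a dense one. Likewise, the one-mode Weyl algebra acts reducibly on the full space, so Theorem~\ref{thm:svn}, which is stated for irreducible representations, does not apply as you invoke it.

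The conclusion survives with one more line. The exponential law gives $\mathcal{F}\cong\mathcal{F}_{\mathbf{k},\sigma}\otimes\mathcal{F}'$, with $\mathcal{F}'$ the Fock space of the remaining modes (of $f^\perp$ in the smeared case). Under this identification $\hat N_{\mathbf{k},\sigma}=\hat N_{\mathrm{osc}}\otimes\one$, hence $\spec\hat N_{\mathbf{k},\sigma}=\spec\hat N_{\mathrm{osc}}=\{0,1,2,\ldots\}$. Any of the following would also close the step:
\begin{itemize}
\item the occupation-number product basis is complete and diagonalizes $\hat N_{\mathbf{k},\sigma}$;
\item the multiplicity form of Stone--von Neumann: every regular representation is a direct sum of copies of the Schr\"odinger representation.
\end{itemize}
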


\begin{proof}[Proof sketch]
We work in the box-regularized convention of
Remark~\ref{rem:N_regularization}: spatial volume $V$ is finite,
momentum modes are discrete, and the per-mode operators
$\hat{a}_{\mathbf{k},\sigma}, \hat{a}^\dagger_{\mathbf{k},\sigma}$
satisfy $[\hat{a}_{\mathbf{k},\sigma},
\hat{a}^\dagger_{\mathbf{k}',\sigma'}]=
\hbar\delta_{\sigma\sigma'}\delta_{\mathbf{k},\mathbf{k}'}$ as
bona-fide bounded-perturbation operators on Fock space.

Non-negativity follows from
$\bra{\psi}\hat{N}\ket{\psi}=
\hbar^{-1}\|\hat{a}\ket{\psi}\|^2\geq 0$. Self-adjointness follows
from the form $\hat{a}^\dagger\hat{a}$ on a dense common domain. The
integer spectrum follows from the ladder argument: from the per-mode
commutator,
\begin{equation}\label{eq:N_a_comm}
[\hat{N}_{\mathbf{k},\sigma},\hat{a}_{\mathbf{k},\sigma}]=
-\hat{a}_{\mathbf{k},\sigma},\qquad
[\hat{N}_{\mathbf{k},\sigma},\hat{a}^\dagger_{\mathbf{k},\sigma}]=
+\hat{a}^\dagger_{\mathbf{k},\sigma},
\end{equation}
so $\hat{a}^\dagger$ and $\hat{a}$ act as raising and lowering
operators with unit step. If $\ket{\lambda}$ is an eigenstate with
$\hat{N}\ket{\lambda}=\lambda\ket{\lambda}$ ($\lambda\geq 0$ by
non-negativity), then $\hat{a}\ket{\lambda}$ is an eigenstate with
eigenvalue $\lambda-1$ (if non-zero), with norm-squared
$\|\hat{a}\ket{\lambda}\|^2=\lambda\hbar\,\braket{\lambda|\lambda}$.
Iterating, the lowering chain
$\ket{\lambda},\hat{a}\ket{\lambda},\hat{a}^2\ket{\lambda},\ldots$
generates eigenvalues $\lambda-1,\lambda-2,\ldots$; non-negativity
forces this chain to terminate at some
$\ket{\lambda_0}\neq 0$ with
$\hat{a}\ket{\lambda_0}=0$, giving $\lambda_0=0$. Hence $\lambda\in\Z_{\geq 0}$,
and the raising chain $\hat{a}^\dagger\ket{\lambda_0},
(\hat{a}^\dagger)^2\ket{\lambda_0},\ldots$ realizes every
non-negative integer. The integer-valuedness follows.
\end{proof}

\paragraph{Significance.}
Theorem~\ref{thm:integer_N} closes (M1) and (M2): the spectrum of
$|\hat{a}|^2$ is discrete, and $\hat{a},\hat{a}^\dagger$ are
non-commuting. \emph{Photon indivisibility is a theorem of the
canonical commutator combined with the classical scaffold, not a
separate postulate}: a state with $n=1.5$ excitations in a single
mode does not exist because the spectrum of $\hat{N}$ is
$\{0,1,2,\ldots\}$.

\subsubsection{The Planck relation $E=\hbar\omega$}

\begin{theorem}[Planck relation]\label{thm:planck}
Let $\hat{H}$ be the operator obtained from the classical
Hamiltonian~\eqref{eq:H_class} by symmetric ordering of the mode
amplitudes:
\begin{equation}\label{eq:H_quantum_photon}
\hat{H}=\int\frac{\rmd^3k}{(2\pi)^3}\sum_\sigma\frac{1}{2}\,
\omega_\mathbf{k}\,\bigl[\hat{a}^\dagger_{\mathbf{k},\sigma}
\hat{a}_{\mathbf{k},\sigma}+
\hat{a}_{\mathbf{k},\sigma}\hat{a}^\dagger_{\mathbf{k},\sigma}\bigr].
\end{equation}
Using~\eqref{eq:photon_canonical_commutator},
\begin{equation}\label{eq:H_quantum_explicit}
\hat{H}=\int\frac{\rmd^3k}{(2\pi)^3}\sum_\sigma
\hbar\omega_\mathbf{k}\bigl(\hat{N}_{\mathbf{k},\sigma}+
\tfrac{1}{2}\bigr).
\end{equation}
A single-quantum excitation in mode $(\mathbf{k},\sigma)$ has
energy
\begin{equation}\label{eq:E_gamma}
E_\gamma=\hbar\omega_\mathbf{k}.
\end{equation}
\end{theorem}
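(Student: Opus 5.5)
We work, as in the proof of Theorem~\ref{thm:integer_N}, in the box-regularized convention of Remark~\ref{rem:N_regularization}. There the integral $\int\rmd^3k/(2\pi)^3$ becomes a discrete sum over $\mathbf{k}$ with the volume factor $1/V$. The commutator~\eqref{eq:photon_canonical_commutator} then holds per mode as $[\hat{a}_{\mathbf{k},\sigma},\hat{a}^\dagger_{\mathbf{k},\sigma}]=\hbar\,\one$, once the normalization of $\delta_{\mathbf{k},\mathbf{k}'}$ is absorbed into the mode operators. The proof then has three steps, carried out mode by mode.

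\textbf{Step 1: rewrite the Hamiltonian.} Use the commutator to write
\begin{equation*}
\hat{a}_{\mathbf{k},\sigma}\hat{a}^\dagger_{\mathbf{k},\sigma}=\hat{a}^\dagger_{\mathbf{k},\sigma}\hat{a}_{\mathbf{k},\sigma}+\hbar\,\one ,
\end{equation*}
so that the symmetrized bracket in~\eqref{eq:H_quantum_photon} equals $2\hat{a}^\dagger\hat{a}+\hbar\,\one$. Inserting the definition~\eqref{eq:N_def}, namely $\hat{a}^\dagger\hat{a}=\hbar\hat{N}$, the summand becomes $\hbar\omega_\mathbf{k}(\hat{N}_{\mathbf{k},\sigma}+\tfrac12)$, which is~\eqref{eq:H_quantum_explicit}.

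\textbf{Step 2: identify the one-quantum state.} Since modes commute, the operators $\hat{N}_{\mathbf{k},\sigma}$ form a commuting family. By Theorem~\ref{thm:integer_N}, each has spectrum $\{0,1,2,\ldots\}$. The Fock vacuum $\Omega_0$ of~\eqref{eq:fock_vacuum} is annihilated by every $\hat{N}_{\mathbf{k},\sigma}$. The state $\hat{a}^\dagger_{\mathbf{k}_0,\sigma_0}\Omega_0$ has $\hat{N}_{\mathbf{k}_0,\sigma_0}=1$, by the ladder relation~\eqref{eq:N_a_comm}, and $\hat{N}=0$ for all other modes, because $\hat{a}^\dagger_{\mathbf{k}_0,\sigma_0}$ commutes with $\hat{a}_{\mathbf{k},\sigma}$ for $(\mathbf{k},\sigma)\neq(\mathbf{k}_0,\sigma_0)$.

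\textbf{Step 3: read off the energy.} Applying~\eqref{eq:H_quantum_explicit}, the eigenvalue of $\hat{H}$ on this state exceeds that on $\Omega_0$ by exactly $\hbar\omega_{\mathbf{k}_0}$. Equivalently, $[\hat{H},\hat{a}^\dagger_{\mathbf{k}_0,\sigma_0}]=\hbar\omega_{\mathbf{k}_0}\hat{a}^\dagger_{\mathbf{k}_0,\sigma_0}$. This gives~\eqref{eq:E_gamma}.

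The main obstacle is the zero-point term $\sum\tfrac12\hbar\omega_\mathbf{k}$. It is divergent: in the continuum it is proportional to $\delta(\mathbf{0})$, and it diverges in the UV even in a box. So the statement ``single-quantum energy'' must be read as energy \emph{relative to the vacuum}. I would handle this by noting that the constant is a multiple of $\one$ and drops out of commutators. I would therefore establish Step 3 through the commutator form rather than by absolute eigenvalues. Alternatively, one can pass to the normal-ordered Hamiltonian $\hat{H}-\langle\Omega_0,\hat{H}\Omega_0\rangle\one$, which is well-defined with a UV cutoff. The result is then independent of the cutoff and of $V$, and the smeared version of Remark~\ref{rem:N_regularization} follows in the same way.
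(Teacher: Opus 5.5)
Your proposal is correct and follows essentially the same route as the paper's proof. Both arguments box-regularize, use the per-mode identity $\hat a\hat a^\dagger=\hat a^\dagger\hat a+\hbar\one$ to turn the symmetrized summand into $\hbar\omega_{\mathbf k}(\hat N_{\mathbf k,\sigma}+\tfrac12)$, drop the $c$-number zero-point term by normal ordering, and read off $E_\gamma=\hbar\omega_{\mathbf k}$ from Theorem~\ref{thm:integer_N}. Your explicit one-quantum state $\hat a^\dagger_{\mathbf k_0,\sigma_0}\Omega_0$ and the commutator $[\hat H,\hat a^\dagger_{\mathbf k_0,\sigma_0}]=\hbar\omega_{\mathbf k_0}\hat a^\dagger_{\mathbf k_0,\sigma_0}$ make that last step more carefully, and you are right that the zero-point sum already diverges in the ultraviolet in the box, not only in the continuum limit.
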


\begin{proof}
We work in the box-regularized convention of
Remark~\ref{rem:N_regularization}, in which the per-mode
commutator reads
$[\hat{a}_{\mathbf{k},\sigma},
\hat{a}^\dagger_{\mathbf{k}',\sigma'}]
=\hbar\,\delta_{\sigma\sigma'}\delta_{\mathbf{k},\mathbf{k}'}$ on
discrete momenta, so that
$\hat{a}_{\mathbf{k},\sigma}\hat{a}^\dagger_{\mathbf{k},\sigma}
=\hat{a}^\dagger_{\mathbf{k},\sigma}\hat{a}_{\mathbf{k},\sigma}+\hbar\,\one$.
Symmetric ordering of each mode then gives a well-defined per-mode
contribution
$\hbar\omega_{\mathbf{k}}(\hat{N}_{\mathbf{k},\sigma}+\tfrac{1}{2})$,
yielding the discrete sum analog of~\eqref{eq:H_quantum_explicit}.
Passage to the thermodynamic limit recovers the continuum form;
the $\tfrac{1}{2}\hbar\omega_{\mathbf{k}}$ zero-point contribution
becomes a $c$-number divergent integral that is removed by normal
ordering (the standard treatment in QED). The eigenvalue
$E_\gamma=\hbar\omega_{\mathbf{k}}$ for a single-quantum excitation
in mode $(\mathbf{k},\sigma)$ then follows immediately from
Theorem~\ref{thm:integer_N}.
\end{proof}

\paragraph{The zero-point contribution.}
The $\tfrac{1}{2}\hbar\omega_\mathbf{k}$ per mode is the zero-point
energy---structurally, the Heisenberg-uncertainty floor of the
canonical commutator on a harmonic oscillator. The integrated
zero-point energy diverges in continuous form. We do not address its
relation to the cosmological constant problem~\cite{Weinberg1989,
Carroll2001} here; the structural framework treats the divergence as
a renormalization issue.

\subsubsection{Spin spectrum $\pm\hbar$ along propagation}

\begin{theorem}[Spin spectrum]\label{thm:spin}
Let
\begin{equation}\label{eq:photon_spin_op}
\hat{\Lambda}=\int\frac{\rmd^3k}{(2\pi)^3}
\sum_\sigma\hbar\sigma\,\hat{N}_{\mathbf{k},\sigma}
\end{equation}
denote the operator obtained from the classical helicity
content~\eqref{eq:S_class} by symmetric ordering and use of the
canonical commutator~\eqref{eq:photon_canonical_commutator}. On a
single-quantum eigenstate $\ket{\mathbf{k},\sigma}$ in mode
$(\mathbf{k},\sigma)$,
\begin{equation}\label{eq:spin_eigenvalue}
\hat{\Lambda}\,\ket{\mathbf{k},\sigma}=\sigma\hbar\,\ket{\mathbf{k},\sigma},
\qquad \sigma=\pm 1,
\end{equation}
and this eigenvalue equals the spin angular momentum along the
propagation direction $\hat{\mathbf{k}}$ (the helicity).
\end{theorem}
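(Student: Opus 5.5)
Work, as in the proofs of Theorems~\ref{thm:integer_N} and~\ref{thm:planck}, in the box-regularized convention of Remark~\ref{rem:N_regularization}. There the integral in~\eqref{eq:photon_spin_op} becomes a discrete sum $\hat{\Lambda}=\sum_{\mathbf{k}',\sigma'}\hbar\sigma'\hat{N}_{\mathbf{k}',\sigma'}$, and the single-quantum state is $\ket{\mathbf{k},\sigma}=\hbar^{-1/2}\hat{a}^\dagger_{\mathbf{k},\sigma}\Omega_0$, with $\Omega_0$ the Fock vacuum~\eqref{eq:fock_vacuum}. The first step justifies the form of $\hat{\Lambda}$. Symmetrically ordering $|a_{\mathbf{k},\sigma}|^2$ in~\eqref{eq:S_class} gives $\tfrac12\sigma(\hat{a}^\dagger\hat{a}+\hat{a}\hat{a}^\dagger)=\sigma(\hat{a}^\dagger\hat{a}+\tfrac{\hbar}{2})$, exactly as in Theorem~\ref{thm:planck}. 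The zero-point term is $\tfrac{\hbar}{2}\sum_{\mathbf{k}}(\sigma=+1)+\tfrac{\hbar}{2}\sum_{\mathbf{k}}(\sigma=-1)$, and it cancels mode by mode between the two helicities. This differs from the energy case, where normal ordering was required, and I would record it as a mild strengthening. Writing $\hat{a}^\dagger\hat{a}=\hbar\hat{N}$ by~\eqref{eq:N_def} then yields~\eqref{eq:photon_spin_op}, with $\hbar$ entering only through the commutator~\eqref{eq:photon_canonical_commutator}. This closes (M3).

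The second step is the eigenvalue computation. From the per-mode commutators~\eqref{eq:N_a_comm} and commutation across distinct modes, $[\hat{N}_{\mathbf{k}',\sigma'},\hat{a}^\dagger_{\mathbf{k},\sigma}]=\delta_{\mathbf{k}\mathbf{k}'}\delta_{\sigma\sigma'}\hat{a}^\dagger_{\mathbf{k},\sigma}$, and $\hat{N}_{\mathbf{k}',\sigma'}\Omega_0=0$. Hence
\[
\hat{\Lambda}\ket{\mathbf{k},\sigma}=\hbar^{-1/2}[\hat{\Lambda},\hat{a}^\dagger_{\mathbf{k},\sigma}]\Omega_0=\sigma\hbar\ket{\mathbf{k},\sigma}.
\]
Equivalently, $\ket{\mathbf{k},\sigma}$ is the $N=1$ eigenvector of Theorem~\ref{thm:integer_N} in that mode and lies in the kernel of every other $\hat{N}$. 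The statement is about exact eigenvectors, so no continuum limit is needed; one only notes that the result is independent of the box volume.

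The third step, identifying this eigenvalue with the physical spin along $\hat{\mathbf{k}}$, is the main obstacle. The classical formula~\eqref{eq:S_class} is quoted from the literature rather than derived, and the spin/orbital split of photon angular momentum is gauge-dependent. I would argue directly in Coulomb gauge. The spin part of the field angular momentum, $\epsilon_0\int\mathbf{E}\times\mathbf{A}\,\rmd^3x$, when quantized via~\eqref{eq:photon_mode_expansion} and projected on $\hat{\mathbf{k}}$, acts on the polarization $\C^2$ at fixed $\mathbf{k}$. The circular vectors~\eqref{eq:helicity_basis} diagonalize it with eigenvalues $\pm1$, because $\rmi\hat{\mathbf{k}}\times$ is the generator of rotations about $\hat{\mathbf{k}}$ on transverse vectors. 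Its quantized form is therefore $\sum\hbar\sigma\hat{N}$, which coincides with $\hat{\Lambda}$. As a cross-check, rotating the state by $R_{\hat{\mathbf{k}}}(\theta)$ multiplies $\epsilon_{\mathbf{k},\sigma}$ by $\rme^{-\rmi\sigma\theta}$. The rotation generator, scaled by the same $\hbar$ that converts phase rates into observables, thus has eigenvalue $\sigma\hbar$ on $\ket{\mathbf{k},\sigma}$, which is Wigner's helicity for a massless particle.
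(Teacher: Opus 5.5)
Your proposal is correct, and its first two steps are what the paper offers for Theorem~\ref{thm:spin}. The paper gives no separate proof, only the ``where each factor comes from'' decomposition: $\sigma$ from~\eqref{eq:S_class}, $\hbar$ from $\hat{a}^\dagger\hat{a}=\hbar\hat{N}$, and $\hat{N}_{\mathbf{k},\sigma}=1$ on the one-quantum state by Theorem~\ref{thm:integer_N}. That is exactly your symmetric-ordering-plus-ladder computation.

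You go beyond the paper in two ways. First, you make explicit why~\eqref{eq:photon_spin_op} carries no vacuum term: symmetric ordering leaves $\tfrac{\hbar}{2}\sum_{\sigma=\pm}\sigma=0$ at each $\mathbf{k}$, so no normal ordering is needed. The paper uses this silently. Second, for the clause that $\sigma\hbar$ is the spin along $\hat{\mathbf{k}}$, the paper's only support is inheritance from the quoted classical formula~\eqref{eq:S_class}, including its normalization relative to~\eqref{eq:H_class}. Your rotation argument actually proves it. A rotation by $\theta$ about $\hat{\mathbf{k}}$ fixes $\mathbf{k}$ and multiplies $\epsilon_{\mathbf{k},\sigma}$ by $\rme^{-\rmi\sigma\theta}$, whatever phase convention is chosen for $\epsilon_{\mathbf{k},\sigma}$. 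So with $U(\theta)=\rme^{-\rmi\hat{J}\theta/\hbar}$, the full $\hat{\mathbf{k}}\cdot\hat{\mathbf{J}}$ has eigenvalue $\sigma\hbar$, with no appeal to the spin/orbital split.

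Two adjustments to your third step. Let the rotation argument carry the weight and demote the Coulomb-gauge $\epsilon_0\int\mathbf{E}\times\mathbf{A}\,\rmd^3x$ computation to a cross-check, since you assert rather than compute that it normalizes to exactly $\hbar\sigma$. Also read ``projected on $\hat{\mathbf{k}}$'' mode by mode. Projecting the total spin on one fixed axis $\hat{\mathbf{n}}$ gives $\sum\hbar\sigma(\hat{\mathbf{n}}\cdot\hat{\mathbf{k}})\hat{N}_{\mathbf{k},\sigma}$, which agrees with $\hat{\Lambda}$ only on states with wavevectors along $\hat{\mathbf{n}}$. That suffices for the eigenvalue claim but is not an operator identity.

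The paper's route buys brevity and foregrounds the classical/quantum/spectral three-factor reading. Yours buys a self-contained, convention-independent proof of the physical identification.
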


\paragraph{Where each factor comes from.} The structure
of~\eqref{eq:photon_spin_op} reflects three independent inputs:
\begin{itemize}
\item The factor $\sigma=\pm 1$ comes from the classical helicity
structure of the polarization sector (eq.~\eqref{eq:S_class}). It is
dimensionless and present classically.
\item The factor $\hbar$ comes from the canonical
commutator~\eqref{eq:photon_canonical_commutator}, specifically from
the conversion of $\hat{a}^\dagger\hat{a}$ into $\hbar\hat{N}$. It
is what gives the spin its physical units of action.
\item The integer-valued $\hat{N}_{\mathbf{k},\sigma}$ comes from
Theorem~\ref{thm:integer_N}. It is what makes the spectrum of
$\hat{\Lambda}$ on Fock space the integer multiples of $\hbar$
rather than a continuous set.
\end{itemize}

This decomposition into three factors---one classical (helicity),
one quantum (commutator scale $\hbar$), one spectral (integer
$\hat{N}$)---is the structural payoff. \emph{The role of $\hbar$
in~\eqref{eq:photon_spin_op} is to convert the dimensionless
helicity rate $\sigma$ into the physical angular momentum
$\sigma\hbar$.} This closes (M3).

\subsubsection{Spin quantization and energy quantization are one fact}
\label{sec:spin_energy_one_fact}

The most important structural observation in this section is that
$\hat{H}$ in~\eqref{eq:H_quantum_explicit} and the helicity operator
$\hat{\Lambda}$ in~\eqref{eq:photon_spin_op}
are both linear functionals of the \emph{same} integer-valued
operator $\hat{N}_{\mathbf{k},\sigma}$:
\begin{align}
\hat{H} &= \int\frac{\rmd^3k}{(2\pi)^3}\sum_\sigma
(\hbar\omega_\mathbf{k})\,\hat{N}_{\mathbf{k},\sigma}+\text{(zero-point)},\\
\hat{\Lambda} &= \int\frac{\rmd^3k}{(2\pi)^3}
\sum_\sigma(\hbar\sigma)\,\hat{N}_{\mathbf{k},\sigma}.
\end{align}
Both spectra are derived from the same integer-valued counting
structure of $\hat{N}$, with weights $\hbar\omega_\mathbf{k}$ and
$\hbar\sigma$ respectively. Modifying the spectrum of $\hat{N}$
would simultaneously modify both spectra; there is no consistent
way to keep one and break the other.

Energy quantization (the empirical Planck rule) and spin quantization
(the photon's $\pm\hbar$, the empirical Beth experiment of
1936~\cite{Beth1936}) are not two phenomenological discoveries but
two readings of the same structural fact about the canonical
commutator combined with classical Maxwell theory. Historically,
Planck and others established energy quantization first; the
relation to spin came later. In the architecture, the two are
consequences of the same single postulate.

\paragraph{The construction summarized.}
The chain from classical Fourier--Maxwell theory to the three
quantum theorems is summarized in
Fig.~\ref{fig:photon_construction}. The figure makes visible the
structural feature we have emphasized throughout: the construction
has \emph{exactly one} free parameter in the quantum step,
$\hbar$, and three distinct empirical phenomena (photon
indivisibility, the Planck relation, and the spin spectrum) emerge
as theorems from that single postulate combined with the classical
scaffold.

\begin{figure*}[htb]
\centering
\begin{tikzpicture}[
  >=Latex,
  font=\small,
  classical/.style={draw, rounded corners=2pt, fill=gray!8,
                    align=center, inner sep=4pt, minimum width=3.6cm},
  postulate/.style={draw, rectangle, fill=black!85, text=white,
                    align=center, inner sep=5pt, minimum width=4.0cm},
  theorem/.style={draw, rounded corners=2pt, fill=gray!18,
                  align=center, inner sep=4pt, minimum width=3.4cm,
                  minimum height=0.95cm},
  arr/.style={->, thick, >=Latex},
]
  \node[classical] (scaffold) at (0,3.2) {%
    \textbf{Classical Fourier--Maxwell scaffold}\\
    \footnotesize\textsl{(no quantum input)}\\[2pt]
    \scriptsize$\bullet$\,Hilbert space $L^2$ on $k$-shell\\
    \scriptsize$\bullet$\,Symplectic form
       $\{q_{\mathbf{k}\sigma},p_{\mathbf{k}'\sigma'}\}$\\
    \scriptsize$\bullet$\,Schr\"odinger-form mode eq.\\
    \scriptsize$\bullet$\,Polarization $\C^{2}$, helicity
       $\sigma=\pm 1$};

  \node[postulate] (ccr) at (0,1.0) {%
    \textbf{Single quantum postulate}\\[2pt]
    $[\hat a_{\mathbf{k}\sigma},\hat a^{\dagger}_{\mathbf{k}'\sigma'}]
       =\hbar\,\delta_{\sigma\sigma'}\delta(\mathbf{k}-\mathbf{k}')\one$};

  \node[theorem] (T1) at (-5.0,-1.4) {%
    \textbf{Theorem~\ref{thm:integer_N}}\\
    Integer spectrum\\
    $\spec\hat N=\{0,1,2,\dots\}$};
  \node[theorem] (T2) at (0,-1.4) {%
    \textbf{Theorem~\ref{thm:planck}}\\
    Planck relation\\
    $E_\gamma=\hbar\omega_\mathbf{k}$};
  \node[theorem] (T3) at (5.0,-1.4) {%
    \textbf{Theorem~\ref{thm:spin}}\\
    Spin spectrum\\
    $\hat\Lambda\ket{\mathbf{k},\sigma}=\sigma\hbar\ket{\mathbf{k},\sigma}$};

  \draw[arr] (scaffold) -- (ccr);
  \draw[arr] (ccr.south) -- (T1.north);
  \draw[arr] (ccr.south) -- (T2.north);
  \draw[arr] (ccr.south) -- (T3.north);

  \node[font=\footnotesize, align=left, anchor=west]
    at ($(ccr.east) + (0.3,0)$)
    {\textsl{single scalar:} $\hbar$};
\end{tikzpicture}
\caption{The photon-sector construction. Classical Fourier--Maxwell
theory supplies, before any quantum input, a complex Hilbert-space
scaffold (top). A single canonical commutator with scale $\hbar$
(middle) promotes the scaffold to a quantum theory in which photon
indivisibility, the Planck relation $E_\gamma=\hbar\omega$, and the
spin spectrum $\sigma\hbar=\pm\hbar$ along propagation arise as
theorems (bottom). The entire quantum step is governed by one scalar
parameter, $\hbar$; the relativistic content
($k^{2}=0$, $\omega=c|\mathbf{k}|$, helicity $\sigma=\pm 1$)
already lives in the classical scaffold.}
\label{fig:photon_construction}
\end{figure*}
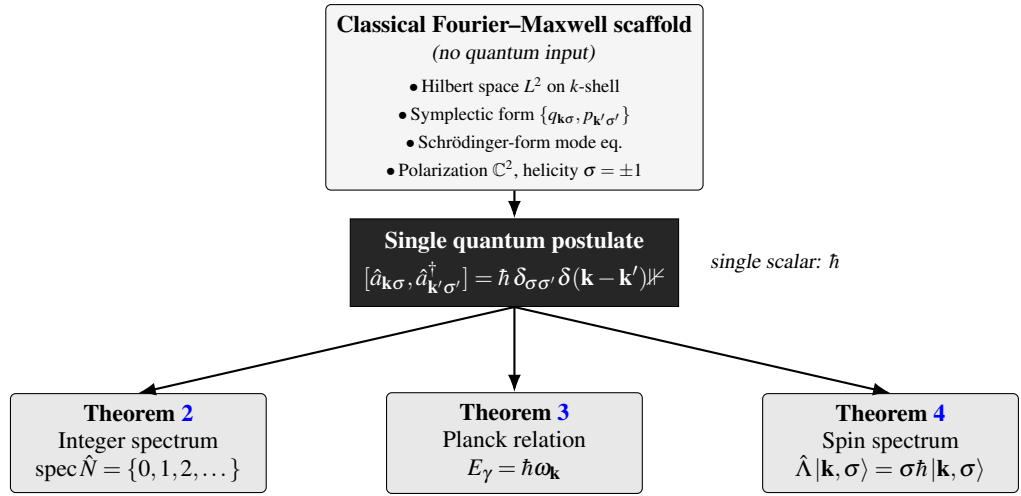

\subsection{The Born rule and the Wigner classification}
\label{sec:photon_born_wigner}

Two final ingredients complete the photon construction.

\subsubsection{The Born rule}

The construction has supplied the operator algebra (canonical
commutator), the integer-valued number spectrum
(Theorem~\ref{thm:integer_N}), the energy and spin spectra
(Theorems~\ref{thm:planck} and~\ref{thm:spin}), and the
identification of photons as single-mode excitations. What it has
not supplied is the rule connecting the operator algebra to
probabilities of measurement outcomes.

We adopt the Born rule~\cite{Born1926} as a separate postulate: for
a state described by density operator $\rho$ and a measurement
described by a positive operator-valued measure with effects
$\{\hat{E}_i\}$, the probability of outcome $i$ is
\begin{equation}\label{eq:born_rule}
P(i)=\Tr(\rho\,\hat{E}_i).
\end{equation}
For the polarization sector at fixed $\mathbf{k}$, the Hilbert
space is $\C^2$ and Gleason's theorem~\cite{Gleason1957}---which
fixes the Born rule on $d\geq 3$ Hilbert spaces---does not apply.
Busch's POVM-based extension~\cite{Busch2003} closes this gap:
under non-contextuality at the effect level (a frame-function
condition on the lattice of effect operators),~\eqref{eq:born_rule}
is the unique probability assignment, and the result holds in every
finite dimension including $d=2$. The polarization sector at fixed
$\mathbf{k}$ is therefore covered.

The operational reconstruction
program~\cite{Hardy2001,ChiribellaDArianoPerinotti2011,
MasanesMuller2011,BruknerZeilinger2002} is logically prior to the
present series: it addresses why the Hilbert-space framework is the
right framework for quantum theory in the first place. The present
series accepts the operator-Hilbert-space framework (motivated by
those reconstructions) and asks the narrower structural question of
what additional ingredients on the algebra select relativistic
kinematics.

\subsubsection{Wigner classification: arrival from the opposite direction}

The construction arrives at the same physical content as Wigner's
representation-theoretic classification of elementary
particles~\cite{Wigner1939,BargmannWigner1948}, from the opposite
direction. Wigner classifies elementary particles as unitary
irreducible representations of the Poincar\'e group; the photon is
the massless helicity-$\pm 1$ case (with the little group
$ISO(2)$, continuous-spin representations excluded as empirical
input).

The Fourier-Maxwell construction starts from classical
electromagnetism (Lorentz-covariant by construction), decomposes
into modes, and quantizes via the canonical commutator with scale
$\hbar$, arriving at: the mass-shell condition $k^2=0$ (massless),
the two-dimensional helicity sector $\sigma=\pm 1$, the spin
spectrum $\pm\hbar$ (Theorem~\ref{thm:spin}), and the energy
spectrum $\hbar\omega_\mathbf{k}$ (Theorem~\ref{thm:planck}). The
Fourier route makes the parametric content of canonical quantization
(one parameter, $\hbar$) explicit; Wigner's route makes the
representation-theoretic content explicit. The two are
complementary readings of the same physical object.

\section{Complementary roles of $c$ and $\hbar$}\label{sec:c_hbar}

The photon-sector construction of Sec.~\ref{sec:photon} carries both
fundamental constants of the present series, $c$ and $\hbar$, in
clearly separated structural roles. We pause to make that separation
explicit, since the reading we extract here will reappear in the
SR-selection conjecture of Sec.~\ref{sec:conjecture} and in the
Unruh-temperature formula of Sec.~\ref{sec:unruh}. The roles are
non-interchangeable: $c$ acts within each Fourier-conjugate space,
and $\hbar$ acts between the two spaces.

\subsection{$c$ is intrinsic to each Fourier-conjugate space}
\label{sec:c_within}

The constant $c$ enters the architecture through the Minkowski
metric, which acts \emph{within} each of the two Fourier-conjugate
spaces:
\begin{itemize}
\item In spacetime $(t,\mathbf{x})$, the Minkowski metric defines
the light cone $s^2=c^2 t^2-|\mathbf{x}|^2=0$, distinguishing
spacelike from timelike separations. This is the locus on which any
local-algebraic axiom system (such as the Haag--Kastler axioms of
Sec.~\ref{sec:haag_kastler}) imposes commutativity.
\item In dual energy-momentum space $(E,\mathbf{p})$, equivalently
in the four-wavevector space $(\omega/c,\mathbf{k})$, the same
metric structure defines the mass shell
$E^2-|\mathbf{p}|^2 c^2=m^2 c^4$, equivalently the null cone
$\omega^2/c^2-|\mathbf{k}|^2=0$ for massless particles---precisely
the locus on which the photon sector of
Sec.~\ref{sec:fourier_maxwell} lives.
\end{itemize}
The constant $c$ shapes the geometry of each space identically: it
acts within each space, not between them.

The two null cones are Fourier duals of each other---this is the
geometric content of equation~\eqref{eq:photon_dispersion_alg}, in
which the spacetime wave equation $\Box A^\mu=0$ becomes the
algebraic constraint $k^2=0$ in $k$-space. Both cones are defined by
the same Minkowski metric, with the same constant $c$.

\subsection{$\hbar$ acts between the two spaces}
\label{sec:hbar_between}

The constant $\hbar$ enters the architecture as the canonical
commutator scale, which connects the two Fourier-conjugate spaces.
The Fourier-conjugate variable to spacetime $(t,\mathbf{x})$ is the
four-wavevector $k^\mu=(\omega/c,\mathbf{k})$, with units of inverse
spacetime---a phase rate per unit time and per unit length. The
physical energy-momentum $p^\mu=(E/c,\mathbf{p})$ has units of
energy and momentum---a dynamical observable. The bridge between
them is
\begin{equation}\label{eq:p_eq_hbar_k}
p^\mu=\hbar\,k^\mu,
\end{equation}
containing the photon Planck relation $E=\hbar\omega$
(Theorem~\ref{thm:planck}) and the de Broglie--like relation
$\mathbf{p}=\hbar\mathbf{k}$. The constant $\hbar$ converts the
Fourier-dual phase rate into the physical energy-momentum
observable. It is the dimensional bridge between the kinematic
structure (which lives in either space) and the dynamical content
(which lives in the energy-momentum space specifically).

The same role of $\hbar$ appears in the angular-momentum sector
(Sec.~\ref{sec:photon_theorems}): the dimensionless helicity rate
$\sigma=\pm 1$ becomes the physical angular momentum $\sigma\hbar$
via multiplication by $\hbar$. The three relations $E=\hbar\omega$,
$\mathbf{p}=\hbar\mathbf{k}$, and $S_z=\pm\hbar$ are three readings
of the same conversion: the action of $\hbar$ on the generators of
the Poincar\'e group.

The structural picture is summarized in
Fig.~\ref{fig:c_hbar_geometry}: $c$ defines the geometry inside each
of the two Fourier-conjugate spaces (light cone in spacetime, null
cone in $k$-space), while $\hbar$ is the bridge between the two
spaces---the conversion factor that turns kinematic phase rates into
dynamical observables.

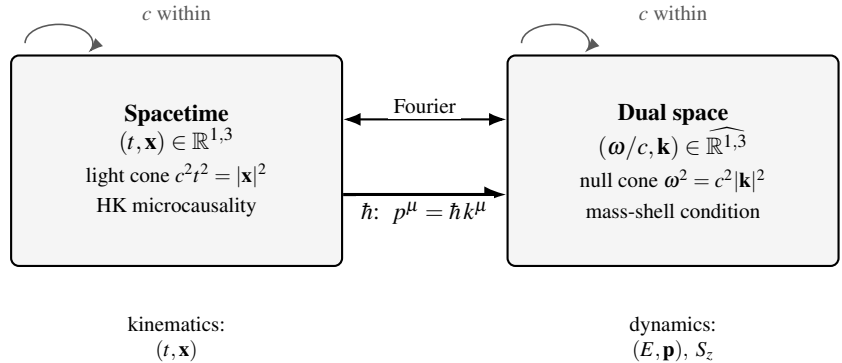
\begin{figure}[htb]
\centering
\begin{tikzpicture}[
  >=Latex,
  font=\small,
  spacebox/.style={draw, thick, rounded corners=3pt, fill=gray!8,
                   minimum width=4.4cm, minimum height=2.8cm,
                   align=center, inner sep=4pt},
  arr/.style={->, very thick, >=Latex},
  cinside/.style={->, thick, gray!70!black, >=Latex,
                  decoration={markings, mark=at position 0.5
                  with {\arrow{>}}}},
]
  \node[spacebox] (ST) at (0,0) {%
    \textbf{Spacetime}\\
    $(t,\mathbf{x})\in\R^{1,3}$\\[2pt]
    {\footnotesize light cone $c^{2}t^{2}=|\mathbf{x}|^{2}$}\\[1pt]
    {\footnotesize HK microcausality}};

  \node[spacebox] (KS) at (6.6,0) {%
    \textbf{Dual space}\\
    $(\omega/c,\mathbf{k})\in\widehat{\R^{1,3}}$\\[2pt]
    {\footnotesize null cone $\omega^{2}=c^{2}|\mathbf{k}|^{2}$}\\[1pt]
    {\footnotesize mass-shell condition}};

  \draw[->, semithick, gray!70!black]
    (ST.north west) ++(0.2,0.15) arc[start angle=180, end angle=-30,
                                      x radius=0.45cm, y radius=0.25cm];
  \node[font=\footnotesize, gray!70!black]
    at ($(ST.north) + (0,0.55)$) {$c$ within};
  \draw[->, semithick, gray!70!black]
    (KS.north west) ++(0.2,0.15) arc[start angle=180, end angle=-30,
                                      x radius=0.45cm, y radius=0.25cm];
  \node[font=\footnotesize, gray!70!black]
    at ($(KS.north) + (0,0.55)$) {$c$ within};

  \draw[<->, thick]
    ($(ST.east) + (0,0.55)$) -- node[above, font=\footnotesize,
                                      fill=white, inner sep=2pt, midway]
    {Fourier} ($(KS.west) + (0,0.55)$);

  \draw[arr]
    ($(ST.east) + (0,-0.45)$) --
    node[below, font=\small, fill=white, inner sep=2pt]
    {$\hbar$: $\;p^{\mu}=\hbar\,k^{\mu}$}
    ($(KS.west) + (0,-0.45)$);

  \node[font=\footnotesize, align=center, below=0.55cm of ST]
    {kinematics:\\ $(t,\mathbf{x})$};
  \node[font=\footnotesize, align=center, below=0.55cm of KS]
    {dynamics:\\ $(E,\mathbf{p}),\;S_{z}$};
\end{tikzpicture}
\caption{The complementary structural roles of $c$ and $\hbar$.
The constant $c$ acts \emph{within} each of the two Fourier-conjugate
spaces, defining the metric structure (light cone in spacetime, null
cone in dual space) and supporting the local-algebra microcausality
of (HK2). The constant $\hbar$ acts \emph{between} the two spaces:
it converts the kinematic phase rates
$(\omega,\mathbf{k},\sigma)$, intrinsic to the dual-space mode
equation, into the dynamical observables $(E,\mathbf{p},S_{z})$,
realized as operators on the single-particle Hilbert space.
The roles are non-interchangeable.}
\label{fig:c_hbar_geometry}
\end{figure}

\subsection{Discussion}\label{sec:c_hbar_discussion}

\paragraph{Phase-rotation generators: a unified reading.}
The Stone-theorem one-parameter unitary groups generated by
time-translation, spatial-translation, and rotation make the
$\hbar$-conversion explicit. Time translation by $t$ is implemented
by $U(t)=\rme^{-\rmi\hat{H}t/\hbar}$, with $\hat{H}$ of
dimension energy (action over time); spatial translation by
$\mathbf{a}$ is
$U(\mathbf{a})=\rme^{-\rmi\hat{\mathbf{p}}\cdot\mathbf{a}/\hbar}$,
with $\hat{\mathbf{p}}$ of dimension momentum (action over length);
rotation by angle $\theta$ is
$U(\theta)=\rme^{-\rmi\hat{J}\theta/\hbar}$, with
$\hat{J}$ of dimension action. In each case the $\hbar$ in the
exponent converts the dimensionful parameter into a dimensionless
phase. The compact rotation group has discrete generator spectrum;
the non-compact translation groups have continuous generator
spectrum---a topological difference, not a difference in the role
of $\hbar$.

\paragraph{Why $c$ does not, by itself, select the relativistic group.}
The constant $c$ does not, by its appearance in classical Maxwell
theory alone, select the relativistic kinematic group. The selection
is structural rather than specific to electromagnetism: as we will
argue (Sec.~\ref{sec:conjecture}), a local-algebraic framework with
sharper-than-equal-time microcausality, positive-energy spectrum,
and canonical commutator with scale $\hbar$ structurally favors a
relativistic kinematic group carrying \emph{some} finite invariant
signaling speed. The specific value of that speed is then fixed by
the dynamical theory at hand (Maxwell sets
$c=1/\sqrt{\mu_0\epsilon_0}$). The empirical fact that all known
massless particles propagate at the same speed is an additional
empirical input not derived from the structural argument.

\paragraph{Why $\hbar$ is universal across the construction.}
The $\hbar$ in the canonical commutator
$[\hat{q},\hat{p}]=\rmi\hbar\,\one$ is the same $\hbar$ that
appears in $E=\hbar\omega$, $\mathbf{p}=\hbar\mathbf{k}$, and the
spin spectrum $\pm\hbar$. Within each closed sector of the photon
construction, $\hbar$ enters once (in the canonical commutator) and
propagates to all other places as a theorem
(Theorems~\ref{thm:planck} and~\ref{thm:spin}). The same $\hbar$
governs other free fields (Klein--Gordon, Dirac, free Yang--Mills);
this universality across all dynamical fields---used in the fourth
paper of the series~\cite{Pachon2026d} as premise (B1) of the
GR-forcing argument---is itself an empirical fact, not a theorem
of the architecture.

\section{The Haag--Kastler lift}\label{sec:haag_kastler}

The non-relativistic algebraic framework of Sec.~\ref{sec:algebraic}
has no notion of spacetime locality. Configuration-space locality is
present (observables can be associated to spatial points or regions
in $\R^n$), but time evolution is not local in space, and there is
no notion of an observable being ``localized in a spacetime region.''
The photon-sector realization of Sec.~\ref{sec:photon} is fully
local in this sense by virtue of being built on relativistic
classical input (the wave equation $\Box A^\mu=0$). The structural
move from non-relativistic to relativistic in the algebraic
framework is the move from the Weyl C*-algebra of finitely many
degrees of freedom to a \emph{net of local algebras} indexed by
spacetime regions. This is the Haag--Kastler
framework~\cite{HaagKastler1964,Haag1992,Araki1999,Buchholz1998,
Brunetti2003,Hollands2018,Fewster2019}.

\subsection{The Haag--Kastler axioms}\label{sec:hk_axioms}

\begin{definition}[Haag--Kastler net]\label{def:haag_kastler}
A Haag--Kastler net on a spacetime $\M$ is an assignment
$\Op\mapsto\A(\Op)$ of a unital $C^*$-algebra $\A(\Op)$ to each open
bounded region $\Op\subset\M$, satisfying:
\begin{enumerate}[label=(\textsc{HK\arabic*})]
\item \textit{Isotony}:
$\Op_1\subset\Op_2\Rightarrow\A(\Op_1)\subset\A(\Op_2)$.
\item \textit{Microcausality}: if $\Op_1$ and $\Op_2$ are spacelike
separated (with respect to a metric structure on $\M$), then
$[\A(\Op_1),\A(\Op_2)]=0$ (mutual commutation in the global algebra
$\A=\overline{\bigcup_\Op\A(\Op)}$).
\item \textit{Covariance}: there is an action of a kinematic group
$G$ by automorphisms of the global algebra
$\alpha:G\to\mathrm{Aut}(\A)$, with $\alpha_g(\A(\Op))=\A(g\cdot\Op)$.
\item \textit{Spectrum condition}: in any vacuum representation,
the generator of time translations has spectrum bounded below.
\end{enumerate}
\end{definition}

The four axioms are the standard ones; we comment on the physical
content of each. Isotony~(HK1) is the natural condition that an
algebra of observables in a smaller region be a subalgebra of the
algebra in a larger one. Microcausality~(HK2) encodes spacelike
locality: observables at spacelike-separated regions
\emph{commute}, expressing the absence of causal relation between
spacelike-separated measurements. The notion of spacelike separation
in (HK2) requires a metric structure on $\M$. Covariance~(HK3)
specifies the kinematic group; the framework leaves the choice of
$G$ open at the axiom level. The spectrum condition~(HK4) ensures
positivity of energy in any vacuum representation.

\paragraph{Sharper-than-equal-time microcausality.}
In a relativistic spacetime, microcausality~(HK2) is non-trivial:
spacelike-separated regions can be at the same or different times,
and microcausality at unequal times (commutation of fields at
spacelike-separated points with different time coordinates) is a
strictly stronger condition than equal-time commutation. We refer to
this as \emph{sharper-than-equal-time microcausality}, and it will
play a load-bearing role in Sec.~\ref{sec:conjecture}: the
conjectured structural pressure favoring relativistic over Galilean
kinematics is precisely the requirement of locality at this sharper
level.

\subsection{Canonical commutators and Haag--Kastler nets:
relativistic and Galilean forms}\label{sec:ccr_forms}

The canonical commutator with scale $\hbar$ enters the Haag--Kastler
framework through smeared field operators. We will need to refer to
two different but related canonical-commutator forms in
Sec.~\ref{sec:conjecture} below.

\paragraph{The $(\hat{\phi},\hat{\pi})$-canonical form.}
For relativistic Klein--Gordon-type theories, the local algebras
$\A(\Op)$ are generated by Hermitian smeared fields $\hat{\phi}(f)$
and conjugate momenta $\hat{\pi}(g)$ for test functions
$f,g\in C_c^\infty(\Op)$, satisfying
\begin{equation}\label{eq:phi_pi_ccr}
[\hat{\phi}(f),\hat{\pi}(g)]=\rmi\hbar\,(f,g)\,\one,
\qquad
[\hat{\phi}(f),\hat{\phi}(g)]=
[\hat{\pi}(f),\hat{\pi}(g)]=0,
\end{equation}
with $(f,g)=\int f g\,\rmd^3x$ the $L^2$-pairing on a spatial slice.
This is the natural form when the field is its own canonical
conjugate (as for a self-conjugate Bose field), and is the form used
in standard textbook treatments of relativistic free
QFT~\cite{Weinberg1995,PeskinSchroeder1995,Srednicki2007}.

\paragraph{The creation--annihilation form.}
For Galilean Schr\"odinger-type theories, the field
$\hat{\psi}(\mathbf{x},t)$ has no separate canonical conjugate
momentum; its conjugate is its adjoint $\hat{\psi}^\dagger$. Adopting
the same $\hbar$-on-the-right convention as
in~\eqref{eq:photon_canonical_commutator}, the equal-time canonical
commutator reads, in smeared form,
\begin{equation}\label{eq:psi_psidagger_ccr}
[\hat{\psi}(f),\hat{\psi}^\dagger(g)]=\hbar\,(f,g)\,\one,
\qquad
[\hat{\psi}(f),\hat{\psi}(g)]=0,
\end{equation}
with $(f,g)=\int \overline{f(\mathbf{x})}g(\mathbf{x})\,\rmd^3x$ the
$L^2$-pairing on a spatial slice. The standard non-relativistic
many-body convention~\cite{FetterWalecka1971,
AbrikosovGorkovDzyaloshinski1963,NegeleOrland1988} absorbs $\hbar$
into the operator normalization
$\tilde{\psi}=\hat{\psi}/\sqrt{\hbar}$ and writes the right-hand
side simply as $(f,g)\,\one$; the two conventions are unitarily
equivalent and we adopt the form~\eqref{eq:psi_psidagger_ccr} to keep
the role of $\hbar$ as the single quantum scale visible across both
canonical-commutator forms.

\paragraph{Different forms, same scalar.}
The two forms~\eqref{eq:phi_pi_ccr} and~\eqref{eq:psi_psidagger_ccr}
are different presentations of the canonical commutator with
$\hbar$ as the single scalar parameter. The conjectural content of
Sec.~\ref{sec:conjecture} below will be that a Haag--Kastler net
with \emph{either} form, plus sharper-than-equal-time microcausality
and positive-energy spectrum, must have a relativistic kinematic
group. The specific form of the canonical commutator is not part of
the structural input; what matters is that some such commutator
exists with $\hbar$ as the single quantum scale, in either form.

\section{The structural conjecture: SR-selection}\label{sec:conjecture}

We now arrive at the central structural claim of the paper.

\subsection{Statement}\label{sec:statement_conjecture}

\begin{conjecture}[SR-selection]\label{conj:structural_selection}
Suppose we are given:
\begin{enumerate}[label=(\arabic*)]
\item A Haag--Kastler net $\{\A(\Op)\}$ on a spacetime $\M$
satisfying axioms (HK1)--(HK4) of
Definition~\ref{def:haag_kastler}, with (HK2) read in its
\emph{sharper-than-equal-time} form (a non-trivial commutation
requirement at unequal times, not merely the trivial vanishing of
equal-time commutators at distinct spatial points).
\item Non-trivial dynamics: the time-translation automorphism is
not the identity, and the algebra contains observables that do not
all commute (excluding the abelian case).
\item Canonical commutation relations with scale $\hbar$: the local
algebras $\A(\Op)$ are generated by smeared field operators
satisfying canonical commutation relations with $\hbar$ as the
single quantum-deformation scale---either in the
$(\hat{\phi},\hat{\pi})$-canonical form~\eqref{eq:phi_pi_ccr} or in
the creation--annihilation form~\eqref{eq:psi_psidagger_ccr}.
\end{enumerate}
Then the kinematic group $G$ in (HK3) cannot be the Galilei group;
$G$ must support a notion of spacelike separation through a metric
structure, and any such $G$ has a finite maximum invariant signaling
speed.
\end{conjecture}

\begin{remark}[On the locality hypothesis]
\label{rem:locality_hypothesis}
The qualification on (HK2) in hypothesis~(1) is essential. In the
standard relativistic formulation, microcausality already presupposes
a metric structure that distinguishes spacelike from timelike---i.e.,
a relativistic structure. Demanding microcausality at unequal times
in the relativistic sense is therefore to some degree presupposing
relativistic kinematics. The contentful version of the conjecture
is therefore: \emph{if one insists on combining the canonical
commutator with a notion of spacetime locality sharper than the
trivial equal-time commutativity of distinct-point observables, then
Galilean kinematics is structurally inadequate.} The
sharper-than-equal-time qualification is not derived; it is the
input. The conjecture asserts that this input rules out Galilean
kinematics, leaving relativistic kinematics as the only consistent
option in the literature.
\end{remark}

\begin{remark}[On the canonical-commutator hypothesis]
\label{rem:ccr_hypothesis}
A parallel comment applies to hypothesis~(3). The
$(\hat{\phi},\hat{\pi})$-canonical form is natural for relativistic
Klein--Gordon-type theories, while the creation--annihilation form
is natural for Galilean Schr\"odinger-type theories. The expanded
hypothesis covering both forms makes the conjecture's content
non-trivial: the claim is that a Haag--Kastler net with \emph{any}
canonical commutator with $\hbar$ as the single quantum scale, plus
sharper-than-equal-time microcausality and positive energy, cannot
be Galilean. The structural argument of
Sec.~\ref{sec:strands} below applies to either canonical-commutator
form: Hegerfeldt's instantaneous spreading and the Reeh--Schlieder
failure are properties of dynamics and spectrum, not of the
canonical-commutator form per se.
\end{remark}

\begin{remark}[Constructed Galilean QFTs do not refute the conjecture]
\label{rem:constructed_galilean}
The qualification in hypothesis~(1) is also what reconciles the
conjecture with the constructed interacting Galilean QFTs in the
literature. The L\'evy-Leblond GaliLee
model~\cite{LevyLeblond1967}, Schrader's local Lee
model~\cite{Schrader1968}, Hepp's
constructions~\cite{Hepp1969}, Eckmann's persistent-vacuum
model~\cite{Eckmann1970}, and the
Lampart--Schmidt--Teufel--Tumulka
construction~\cite{LampartSchmidtTeufelTumulka2018} are all
mathematically rigorous, exhibit non-trivial dynamics with
positive-energy Hamiltonian on Fock space with a
translation-invariant vacuum, and satisfy equal-time canonical
commutation relations. None of them, however, satisfies a
sharper-than-equal-time locality condition---and indeed cannot,
since Schr\"odinger evolution
$\rme^{\rmi{}t\Delta/2m}$ has infinite propagation speed.
There is no spacelike region in Galilean spacetime in the
relativistic sense. These constructions establish a complementary
structural fact---that Galilean QFT can be constructed at the level
of equal-time CCR plus Fock vacuum plus local interaction
density---which is the main reason
Conjecture~\ref{conj:structural_selection} must include the
qualified hypothesis~(1) rather than the unqualified microcausality
of relativistic axiomatization. The contentful distinction is
therefore not whether Galilean QFT exists at all (it does) but
whether it admits a Haag--Kastler structure at the
sharper-than-equal-time level (we conjecture it does not).
\end{remark}

\subsection{Three strands of the supporting argument}\label{sec:strands}

By the qualifications in
Remarks~\ref{rem:locality_hypothesis}--\ref{rem:constructed_galilean},
the structural argument is best read as follows: at equal-time
level, Galilean Haag--Kastler nets exist; at the
sharper-than-equal-time level, they cannot exist consistently with
positive-energy dynamics, on the basis of three strands of evidence.

\subsubsection{Strand (i): Hegerfeldt's instantaneous spreading}
\label{sec:hegerfeldt}

\begin{theorem}[Hegerfeldt~\cite{Hegerfeldt1974,Hegerfeldt1985,
Hegerfeldt1998}]\label{thm:hegerfeldt}
Let $\hat{H}$ be a self-adjoint operator on a Hilbert space $\Hil$
with spectrum bounded below, and let $\{\hat{P}_V\}_V$ be a family
of projection operators indexed by bounded spatial regions
$V\subset\R^3$, satisfying $\hat{P}_V\leq\hat{P}_{V'}$ whenever
$V\subset V'$.

If a non-zero state $\ket{\psi}\in\Hil$ is localized in a bounded
region $V_0$ at time $t=0$ ($\hat{P}_{V_0}\ket{\psi}=\ket{\psi}$),
then the time-evolved state
$\ket{\psi(t)}=\rme^{-\rmi\hat{H}t/\hbar}\ket{\psi}$
cannot remain localized in any single bounded region for any
non-trivial time interval: equivalently, for any $t>0$ there exist
bounded regions $V_t$ disjoint from $V_0$ on which
$\hat{P}_{V_t}\ket{\psi(t)}\neq 0$.
\end{theorem}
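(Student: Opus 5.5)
The plan is Hegerfeldt's analyticity argument: positivity of energy forces the boundary values of a holomorphic function, and such a function cannot vanish on an interval of the real time axis without vanishing identically. Fix a bounded region $V'$ disjoint from $V_0$ and put $A=\hat{P}_{V'}$, so $A=A^*A$. Define
\begin{equation*}
g_\phi(z)=\braket{\phi|\hat{P}_{V'}\,\rme^{-\rmi\hat{H}z/\hbar}\psi},\qquad \phi\in\Hil,\ \operatorname{Im}z\le 0 .
\end{equation*}
The first step is to show that each $g_\phi$ is holomorphic in the open lower half-plane and continuous up to the real axis. Since $\spec\hat{H}\subset[E_0,\infty)$, the spectral theorem gives $\|\rme^{-\rmi\hat{H}z/\hbar}\|\le \rme^{E_0\operatorname{Im}z/\hbar}$ for $\operatorname{Im}z\le0$. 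Holomorphy then follows by differentiating under the spectral integral, and continuity up to the boundary follows by dominated convergence.

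The second step is the vanishing lemma. Suppose $\hat{P}_{V'}\ket{\psi(t)}=0$ for all $t$ in an interval $I\subset\R$. Then every $g_\phi$ has zero boundary values on $I$. By the Schwarz reflection principle, $g_\phi$ extends holomorphically across $I$ to a function on the lower half-plane together with its mirror image, glued along $I$. That extension vanishes on a real segment, so it is zero, and hence $g_\phi\equiv0$, including on all of $\R$. Alternatively one can use the Lusin--Privalov uniqueness theorem, which needs only vanishing on a set of positive measure. As $\phi$ is arbitrary, the conclusion is that $\hat{P}_{V'}\ket{\psi(t)}=0$ either for all real $t$ or for at most a measure-zero set of $t$ in any interval.

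The third step converts this dichotomy into the statement, using additional structure that the statement as written leaves implicit and that I would make explicit.

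\emph{Localization ingredient.} Projections of disjoint regions should be orthogonal, $\hat{P}_{V}\hat{P}_{V'}=0$ for $V\cap V'=\emptyset$. Then $\hat{P}_{V'}\ket{\psi(0)}=0$, which is consistent with the first branch of the dichotomy.

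\emph{Non-staticity ingredient.} This rules out the branch in which the state never leaves $V_0$. In Hegerfeldt's setting it is translation covariance, $U(\mathbf{a})\hat{P}_V U(\mathbf{a})^*=\hat{P}_{V+\mathbf{a}}$, together with $[\hat{H},U(\mathbf{a})]=0$. The argument applies the lemma both to $\psi$ and to its translate $U(\mathbf{a})\psi$, which is localized in $V_0+\mathbf{a}$. It then combines this with the fact that a nonzero state is never invariant under all $\rme^{-\rmi\hat{H}t/\hbar}$ unless it is an eigenvector of $\hat{H}$. Bound states are excluded either by an explicit hypothesis or by absolute continuity of the spectrum of $\hat{H}$ on the relevant sector; the point to establish is that permanent confinement to $V_0$ fails for such states.

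With the static branch excluded, the lemma says that for every interval $I$ the set of $t\in I$ with $\hat{P}_{V'}\ket{\psi(t)}=0$ has measure zero. Now suppose that for some $t_0>0$ and a window around it, $\psi(t)$ stayed in a bounded region $W$. Choosing $V'$ disjoint from $W$ (and from $V_0$) gives $\hat{P}_{V'}\ket{\psi(t)}=0$ on that window, which forces the static branch, a contradiction. Taking $V_t$ among these disjoint regions gives the stated conclusion for all $t$ outside a null set.

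I expect the main obstacle to be the non-staticity input. The analytic core (holomorphy plus reflection) is routine once the spectrum is bounded below. The step where the proof needs care is the one where localization on a time interval is converted into localization for all times and then contradicted. That contradiction does not follow from $\hat{P}_V\le\hat{P}_{V'}$ alone, so I would state translation covariance and the absence of bound states localized in $V_0$ as the working hypotheses, following Hegerfeldt's original formulation.
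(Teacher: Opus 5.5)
Your analytic core is the paper's own argument. The paper's sketch likewise uses only two facts: a lower spectral bound makes $t\mapsto\langle\phi,\rme^{-\rmi\hat H t/\hbar}\psi\rangle$ the boundary value of a function holomorphic in the lower half-plane, and vanishing on an interval then propagates to the whole real axis. That gives exactly your dichotomy (confined for all $t$, or leaking for almost every $t$). The paper's sketch stops there and leaves the exclusion of the first branch implicit.

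You are right that the first branch cannot be excluded under the stated hypotheses. Take $\Hil=L^2(\R^3)\otimes L^2(\R)$ with $\hat P_V=\chi_V\otimes\one$ and $\hat H=\one\otimes\rme^{y}$. Then every state $\chi_{V_0}\otimes g$ stays in $V_0$ forever. So the statement as written is false, and your extra hypotheses (orthogonality of $\hat P_V,\hat P_{V'}$ for disjoint regions, plus an explicit non-confinement input) are genuinely needed. Your almost-every-$t$ conclusion is also the correct strength. For the isotropic oscillator on $L^2(\R^3)$, whose spectrum is $\hbar\omega(\Z_{\geq 0}+\tfrac32)$, with $\hat P_V=\chi_V$, a compactly supported state spreads at once, yet $\psi(2\pi/\omega)=-\psi$. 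So the ``for any $t>0$'' clause fails at isolated times.

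The weak spot is your sketch of how the confined branch is excluded.
\begin{itemize}
\item Permanent confinement is not invariance under $\rme^{-\rmi\hat H t/\hbar}$, so your remark about eigenvectors does not bear on it.
\item The example above is translation covariant, with $[\hat H,U(\mathbf a)]=0$ and purely absolutely continuous spectrum. So translation covariance plus absence of eigenvectors does not suffice either.
\end{itemize}
You have two clean ways to fill the step:
\begin{itemize}
\item Assume outright that no nonzero state remains in a bounded region for all $t$.
\item Add a local-compactness condition. Suppose $\hat P_B\,E_{\hat H}(\Delta)$ is compact for bounded $\Delta$; this holds, for instance, for $\hat H=h(\hat{\mathbf P})$ with $h(\mathbf p)\to\infty$ as $|\mathbf p|\to\infty$. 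Suppose also that $\psi$ lies in the continuous spectral subspace of $\hat H$. Then the RAGE theorem gives $T^{-1}\int_0^T\|\hat P_B\psi(t)\|^2\,\rmd t\to 0$, which rules out confinement to $B$.
\end{itemize}
Finally, your last step needs non-confinement for a bounded $B\supset W\cup V_0$, not only for $V_0$.
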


The proof proceeds by edge-of-the-wedge analyticity in the time
variable; spectrum bounded below makes
$t\mapsto\bra{\psi}\rme^{-\rmi\hat{H}t/\hbar}\ket{\phi}$
the boundary value of a function holomorphic in the lower
half-plane, and holomorphy plus constancy on an interval propagates
to constancy everywhere. The hypotheses are remarkably weak: any
positive-energy Hamiltonian satisfies them. The theorem applies
equally to non-relativistic free particles
($\hat{H}=\hat{\mathbf{p}}^2/2m$), non-relativistic particles in
potentials, relativistic single-particle Hamiltonians, and any other
positive-energy Hamiltonian.

\paragraph{The honest implication.}
Hegerfeldt's theorem establishes that \emph{any single-particle
quantum mechanics with positive-energy Hamiltonian and projections
onto bounded spatial regions exhibits instantaneous spreading},
regardless of kinematic group. The theorem is a general fact about
positive-energy unitary evolution, not specifically a relativistic
or non-relativistic statement. By itself, it does \emph{not} select
between relativistic and non-relativistic kinematics: it applies to
both, and shows that single-particle locality is impossible under
positive-energy evolution. What distinguishes the two cases is
\emph{what resolves} the apparent contradiction with sharper
locality.

In relativistic QFT, Hegerfeldt's spreading is reinterpreted within
a multi-particle framework where particle creation/annihilation
balances the apparent superluminal probability flow, preserving
microcausality at the field-operator level. This resolution exploits
the existence of a Lorentz-invariant positive-frequency
decomposition. In non-relativistic QFT, no such reinterpretation is
\emph{known}: Galilean QFT does not appear to have a multi-particle
structure that resolves single-particle Hegerfeldt spreading while
preserving sharper-than-equal-time microcausality and positive
energy. The absence is what is asserted in the
literature~\cite{Hegerfeldt1998} and what we use as strand~(i) of
the supporting argument; we are explicit that ``no known
resolution'' is weaker than ``no resolution can exist.''

\subsubsection{Strand (ii): no Galilean multi-particle resolution
(with Bargmann mass superselection as sub-mechanism)}
\label{sec:no_multiparticle}

In relativistic QFT, single-particle Hegerfeldt spreading is
resolved by recognizing that the field operator $\hat{\phi}(x)$
creates and annihilates particles, and the apparent superluminal
probability flow is actually a balance of creation and annihilation
amplitudes that sums to zero outside the light cone. The mechanism
requires the Lorentz-invariant positive-frequency decomposition,
which has no Galilean analog.

A related structural feature on the Galilean side is Bargmann mass
superselection (Remark~\ref{rem:bargmann}): in Galilean QFT with the
Bargmann central extension, mass is a superselection charge,
restricting the kinds of multi-particle superpositions available in
any Galilean theory~\cite{Bargmann1954,WickWightmanWigner1952}. We
flag, however, that the precise mechanism by which Bargmann
superselection blocks a hypothetical Galilean analog of the
relativistic Hegerfeldt resolution is not, to our knowledge,
articulated in a single rigorous statement; the relativistic
resolution involves particle--antiparticle creation/annihilation
that is itself mass-conserving, so the connection between
mass-superselection and the obstruction is more subtle than a direct
forbidding of mass-mixing. We treat Bargmann superselection as a
sub-mechanism contributing suggestive support to the
no-multi-particle-resolution strand, not as a free-standing
argument.

\subsubsection{Strand (iii): Reeh--Schlieder failure on Galilean
Haag--Kastler nets}\label{sec:reeh_schlieder_strand}

The structural property that is automatic in relativistic AQFT and
that fails in Galilean AQFT is the Reeh--Schlieder
cyclic-and-separating property of the vacuum.

\begin{definition}[Reeh--Schlieder property]
\label{def:reeh_schlieder}
A vacuum vector $\Omega\in\Hil$ is \emph{cyclic} for a local algebra
$\A(\Op)$ if $\overline{\A(\Op)\,\Omega}=\Hil$, and \emph{separating}
if $A\in\A(\Op)$ and $A\Omega=0$ imply $A=0$. The Haag--Kastler net
$\{\A(\Op)\}$ has the Reeh--Schlieder property if $\Omega$ is cyclic
and separating for every local algebra $\A(\Op)$ with $\Op$ bounded,
non-empty, and with non-empty causal complement.
\end{definition}

In the relativistic Haag--Kastler framework, the Reeh--Schlieder
property is a \emph{theorem}~\cite{ReehSchlieder1961,Haag1992},
following from weak additivity, the spectrum condition, and
Lorentzian microcausality, with the proof proceeding via
edge-of-the-wedge analyticity that exploits the
\emph{pointed proper convex} spectrum cone $\overline{V}^+\subset
\R^4$ (relativistic spectrum: $p^2\geq 0$, $p^0\geq 0$). In the
Galilean setting, the spectrum condition imposes only
$E\geq 0$ with the three spatial momenta unconstrained, giving a
half-space $\{E\geq 0\}\subset\R^4$ rather than a pointed cone:
$\{E\geq 0\}\cap\{-E\geq 0\}=\{E=0\}\cong\R^3$ contains an entire
3-momentum subspace, not just $\{0\}$. The Bargmann--Hall--Wightman
analyticity needs a pointed proper cone (with non-empty interior in
its dual sense) to drive the multidimensional analytic continuation;
a half-space's dual is a half-line, which does not. The Galilean
spectrum therefore fails the pointedness requirement and the
analyticity argument breaks down. The Reeh--Schlieder property must
therefore be \emph{postulated} in the Galilean setting.

The structural content of strand~(iii) is that this postulate is
inconsistent with the rest of the Galilean Haag--Kastler structure.
The companion paper~\cite{Pachon2026b} establishes this as a
precise no-go theorem: under the standard Galilean Haag--Kastler
axioms (G1)--(G6) (essentially the Galilean analog of (HK1)--(HK4)
plus translation-invariant unique vacuum and Bargmann-extended
covariance), augmented by an explicit Fock-representation hypothesis
(G7) or by natural Bargmann-charge regularity hypotheses (G7$^*$),
the addition of the Reeh--Schlieder property is inconsistent with
the rest of the axiom set. The proof combines two facts: that
Galilean Schr\"odinger fields annihilate the Fock vacuum, and that
Bargmann mass superselection forbids the Hermitian-combination
evasion (the ``Halvorson mechanism''~\cite{Halvorson2001}) that
preserves consistency in the relativistic Bose case. The
contradiction emerges through the equal-time canonical commutation
relations.

A modular corollary~\cite{Pachon2026b} follows immediately:
the Tomita--Takesaki modular flow on local field algebras of any
Galilean Haag--Kastler net (under the same hypotheses) is
\emph{undefined}, because no such net has a separating vacuum.

The deeper analytic root of strand~(iii) is the absence in the
Galilean case of the multidimensional Wightman analyticity that
underwrites the bundle of relativistic rigidity theorems---CPT,
spin-statistics, the Jost--Schroer characterization of free fields,
Haag's theorem, the Bisognano--Wichmann modular geometry treated in
Sec.~\ref{sec:modular_geometry} below, and the
Buchholz--D'Antoni--Fredenhagen type-III$_1$ universality of local
algebras~\cite{BuchholzDAntoniFredenhagen1987}---all of which share
a single technical input: the analyticity of Wightman functions in
the permuted extended tube, derived from the Lorentzian forward-cone
spectrum condition via the Bargmann--Hall--Wightman
theorem~\cite{HallWightman1957} and edge-of-the-wedge methods. The
Galilean spectrum condition supports only one-variable upper-half-plane
analyticity in time (sufficient for KMS and Schr\"odinger-semigroup
analysis but insufficient to drive the multidimensional structure).
The Galilean failures of CPT~\cite{Hagen2004,GreavesThomas2014},
spin-statistics~\cite{Hagen2004}, and Haag's
theorem~\cite{Klaczynski2016,EarmanFraser2006} are not independent
pathologies but joint consequences of this analytic asymmetry.

\subsection{Status of the conjecture}\label{sec:status_conjecture}

The three strands have different epistemic status. Strand~(i) is a
rigorous theorem (Theorem~\ref{thm:hegerfeldt}), but applies equally
to relativistic and non-relativistic Hamiltonians and is therefore
not by itself selective. Strand~(ii) is folk-theorem-level: the
absence of a Galilean multi-particle resolution is asserted in the
literature but not, to our knowledge, proved as a single rigorous
no-go statement. Strand~(iii) is, with the present
series~\cite{Pachon2026b}, established as a precise no-go
theorem under explicit hypotheses (Fock representation, or
Bargmann-charge regularity in the (G7$^*$) form).

The full Conjecture~\ref{conj:structural_selection} as stated above
is broader than the rigorous theorem of~\cite{Pachon2026b} in
two specific senses. First, the conjecture's locality hypothesis is
sharper-than-equal-time microcausality with arbitrary CCR form,
while~\cite{Pachon2026b} works with equal-time CCR
(strengthened by the Reeh--Schlieder hypothesis). Second, the
conjecture is intended to apply to any CCR realization in the
field-theoretic setting, while the strongest formulation
of~\cite{Pachon2026b}---the Strengthened Obstruction
Theorem---requires that the canonical fields carry definite Bargmann
mass charges and admit time-zero restrictions on a
field-algebra-stable common dense domain (the (G7$^*$)(a) and
(G7$^*$)(d) clauses there), conditions that are weaker than the
Fock hypothesis (G7) but still narrower than ``arbitrary CCR
realization.'' The gap between the proved theorem and the full
conjecture is the natural target for further work; it is catalogued
explicitly as an open problem in~\cite{Pachon2026b}.

The salient point for the present paper is that the central
structural conjecture's primary load-bearing strand---the
Reeh--Schlieder failure---is no longer folk-theorem-level: it is a
theorem under explicit hypotheses, with the remaining work consisting
of weakening those hypotheses to recover the full conjecture. The
present series therefore establishes the SR-selection conjecture as
a theorem in its load-bearing strand and as a folk-theorem-level
structural argument for the other strands.

\subsection{What the conjecture does and does not establish}
\label{sec:scope_conjecture}

We are explicit about the scope of
Conjecture~\ref{conj:structural_selection}.

\paragraph{What is claimed.}
The conjecture asserts that the algebraic axioms (HK1)--(HK4) in
their sharper-than-equal-time form, combined with non-trivial
dynamics and a canonical commutator with $\hbar$ as the single
quantum scale, structurally select \emph{some} relativistic kinematic
group. The principle of relativity (covariance of the net under the
chosen group) is the content of (HK3); it ensures frame independence
within the chosen relativistic group, but the structural pressure
favoring relativistic over Galilean kinematics comes from the
sharper locality requirement of (HK2), not from the principle of
relativity. Frame independence and the relativistic character of the
kinematic group are two distinct ingredients with two distinct
sources.

\paragraph{What is not claimed.}
\begin{itemize}
\item \emph{No selection among relativistic groups.} The conjecture
forces some relativistic group but does not distinguish between
Poincar\'e, de Sitter, anti-de Sitter, conformal, or other
relativistic options. The choice of Poincar\'e for flat-spacetime
physics is empirical.
\item \emph{No derivation of the value of $c$ or of $\hbar$.} The
framework selects the existence of a finite maximum signaling speed
when the structural axioms are imposed, but the value of that speed
is determined by the specific classical theory whose net is
quantized (Maxwell sets $c=1/\sqrt{\mu_0\epsilon_0}$). Similarly,
$\hbar$ is the scale of the canonical commutator, with empirical
value $\hbar\approx 1.055\times 10^{-34}\,\mathrm{J}\cdot\mathrm{s}$.
The framework explains the structural roles of these constants but
does not derive their numerical values.
\item \emph{No new empirical predictions.} The framework reproduces
standard relativistic QFT in the photon sector and is consistent
with standard relativistic QFT more broadly. The Lorentz-invariance
tests, microcausality bounds, and other empirical confirmations of
standard QFT are confirmations of the architecture, but the
architecture adds no new discriminator.
\end{itemize}

\section{Modular geometry of acceleration}\label{sec:modular_geometry}

We close with the modular-theoretic content of the framework. This
section is preparatory: the third through sixth papers of the
series~\cite{Pachon2026c,Pachon2026d,Pachon2026e,Pachon2026f} all
build on the modular-theoretic apparatus collected here, and we
wish to gather it in one place. The technical content is
standard~\cite{Takesaki1970,
BratteliRobinson1997,Haag1992,BisognanoWichmann1975,
BisognanoWichmann1976}; what we add is the framing, in which
acceleration appears as derived algebraic content rather than
external structure, and modular theory becomes a candidate
background-independent algebraic ingredient surviving the move to
dynamical metric.

\subsection{Tomita--Takesaki modular theory}\label{sec:tomita_takesaki}

Let $\mathcal{M}$ be a von Neumann algebra acting on a Hilbert space
$\Hil$, and let $\Omega\in\Hil$ be a unit vector that is
\emph{cyclic} for $\mathcal{M}$ ($\overline{\mathcal{M}\Omega}=\Hil$)
and \emph{separating} ($A\Omega=0\Rightarrow A=0$ for
$A\in\mathcal{M}$). Define the closable antilinear operator $S$ on
$\mathcal{M}\Omega$ by
\begin{equation}\label{eq:tomita_S}
S\,A\,\Omega=A^*\,\Omega,\qquad A\in\mathcal{M},
\end{equation}
and let $S=J\,\Delta^{1/2}$ be its polar decomposition, with $J$
antiunitary and $\Delta$ positive self-adjoint.

\begin{theorem}[Tomita--Takesaki~\cite{Takesaki1970,
BratteliRobinson1997}]\label{thm:tomita_takesaki}
The operator $\Delta$ generates a one-parameter group of
\emph{modular automorphisms} of $\mathcal{M}$,
\begin{equation}\label{eq:modular_automorphism}
\sigma_t(A):=\Delta^{\rmi{}t}\,A\,\Delta^{-\rmi{}t},
\qquad
\Delta^{\rmi{}t}\,\mathcal{M}\,\Delta^{-\rmi{}t}=\mathcal{M}
\quad\text{for all }t\in\R,
\end{equation}
and $J\,\mathcal{M}\,J=\mathcal{M}'$ (the commutant). The state
$\omega(A)=\bra{\Omega}A\ket{\Omega}$ satisfies the KMS
condition with respect to the modular flow at inverse temperature
$\beta=1$ in the modular parameter: for $A,B\in\mathcal{M}$, the
function $t\mapsto\omega(A\,\sigma_t(B))$ admits an analytic
continuation to the strip $\{0<\mathrm{Im}\,t<1\}$, with boundary
values
\begin{equation}\label{eq:kms_modular}
\omega\bigl(A\,\sigma_{t+\rmi}(B)\bigr)=
\omega\bigl(\sigma_t(B)\,A\bigr).
\end{equation}
\end{theorem}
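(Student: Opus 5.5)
Our route is the standard one: first the Tomita--Takesaki theorem proper ($\Delta^{\rmi t}\mathcal{M}\Delta^{-\rmi t}=\mathcal{M}$ and $J\mathcal{M}J=\mathcal{M}'$), then the KMS property as a comparatively routine consequence. The preliminary steps are as follows. Since $\Omega$ is separating for $\mathcal{M}$, it is cyclic for $\mathcal{M}'$. Hence $F\,A'\Omega=A'^*\Omega$ ($A'\in\mathcal{M}'$) is densely defined. A one-line computation, $\langle B\Omega,SA\Omega\rangle=\langle A\Omega,FB\Omega\rangle$ for $A\in\mathcal{M}$, $B\in\mathcal{M}'$, shows $F\subset S^*$. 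So $S$ is closable. We write $\bar S=J\Delta^{1/2}$ with $\Delta=S^*\bar S$. From $S^2=\one$ on $\mathcal{M}\Omega$ we get $J=J^*=J^{-1}$ and $J\Delta J=\Delta^{-1}$. We also prove $\bar F=S^*$; this is the standard approximation argument with bounded operators built from vectors in $\mathrm{dom}(S^*)$.

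The heart of the proof, and the main obstacle, is the invariance $\Delta^{\rmi t}\mathcal{M}\Delta^{-\rmi t}=\mathcal{M}$. We would follow the resolvent route of van Daele (as in Bratteli--Robinson). For $\lambda>0$ and $A'\in\mathcal{M}'$, one constructs $A_\lambda\in\mathcal{M}$ with $\langle A_\lambda\Omega$-relation $(\Delta+\lambda)^{-1}$ matching $A'$; concretely, $A_\lambda\Omega$ solves a quadratic-form equation. Existence and boundedness come from a Radon--Nikodym-type lemma: a bounded sesquilinear form dominated by the state $\omega'$ on $\mathcal{M}'$ is implemented by an element of $\mathcal{M}''=\mathcal{M}$. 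One then gets the identity
\begin{equation*}
\int_{-\infty}^{\infty}\frac{\lambda^{\rmi t}}{\cosh\pi t}\,\Delta^{\rmi t}JA'^{*}J\Delta^{-\rmi t}\,\rmd t\in\mathcal{M}\quad(\text{up to normalization}),
\end{equation*}
valid for all $\lambda>0$. Inverting this Fourier-type transform (the $\lambda^{\rmi t}$ separate the $t$-dependence) yields $\Delta^{\rmi t}J\mathcal{M}'J\Delta^{-\rmi t}\subset\mathcal{M}$ for every $t$. Setting $t=0$ gives $J\mathcal{M}'J\subset\mathcal{M}$. By symmetry, since the roles of $\mathcal{M}$ and $\mathcal{M}'$ interchange with $\Delta\to\Delta^{-1}$ and the same $J$, we also get $J\mathcal{M}J\subset\mathcal{M}'$. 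Together these give $J\mathcal{M}J=\mathcal{M}'$ and the invariance of $\mathcal{M}$ under $\sigma_t$. The delicate points are the complex interpolation estimates on $\Delta^{z}$ on the strip $0\le\mathrm{Re}\,z\le 1/2$ used in that resolvent identity, and we expect to spend most effort there.

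The KMS condition then follows with little extra work. Take $A,B\in\mathcal{M}$ and put
\begin{equation*}
F(t)=\omega(A\sigma_t(B))=\langle A^*\Omega,\Delta^{\rmi t}B\Omega\rangle .
\end{equation*}
Since $B\Omega\in\mathrm{dom}(\Delta^{1/2})$ and $A^*\Omega\in\mathrm{dom}(\Delta^{1/2})$, we can write $F(t)=\langle\Delta^{1/2}A^*\Omega,\Delta^{\rmi t-1/2}\Delta^{1/2}\cdots\rangle$-type splittings. These show, by spectral calculus, that $z\mapsto\langle\Delta^{-\bar z/2\cdots}\rangle$ is analytic in the strip $0<\mathrm{Im}\,t<1$, bounded, and continuous to the boundary. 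At $\mathrm{Im}\,t=1$ we use $\Delta^{1/2}A^*\Omega=JA\Omega$ and $\Delta^{1/2}B\Omega=JB^*\Omega$, together with antiunitarity of $J$. This gives $\langle JB^*\Omega,\Delta^{\rmi t}JA\Omega\rangle$-form expressions, which we rewrite as $\langle B^*\Omega,\Delta^{\rmi t}A\Omega\rangle$ using $J\Delta^{\rmi t}=\Delta^{\rmi t}J$. The result equals $\omega(\sigma_t(B)A)$ after using $\Delta^{\rmi t}\Omega=\Omega$. This establishes \eqref{eq:kms_modular}.
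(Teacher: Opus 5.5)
The paper gives no proof of this theorem; it only cites Takesaki and Bratteli--Robinson. Your Tomita--Takesaki part is the route of the cited Bratteli--Robinson (van Daele) proof: $F\subset S^*$, $\bar F=S^*$, $J\Delta J=\Delta^{-1}$, the elements $A_\lambda$, the $\lambda^{\rmi t}/\cosh\pi t$ integral, Fourier inversion in $\log\lambda$, then $t=0$ and the $\mathcal{M}\leftrightarrow\mathcal{M}'$ symmetry. It is acceptable as an outline, although the existence of $A_\lambda\in\mathcal{M}$ (the boundedness of the relevant form) is only gestured at.

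The genuine problem is the KMS step, and it cannot be repaired as written: the target identity is wrong for the convention $\sigma_t(A)=\Delta^{\rmi t}A\Delta^{-\rmi t}$.

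\textbf{Where the argument fails.} For $z=t+\rmi s$ we have $\Delta^{\rmi z}=\Delta^{\rmi t}\Delta^{-s}$. Continuing $\langle A^*\Omega,\Delta^{\rmi z}B\Omega\rangle$ into $0<s<1$ therefore needs \emph{negative} powers of $\Delta$ on $\mathcal{M}\Omega$. But you only know $\mathcal{M}\Omega\subset\mathrm{dom}(\Delta^{1/2})$; negative powers are controlled on $\mathcal{M}'\Omega\subset\mathrm{dom}(\Delta^{-1/2})$.

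What the spectral calculus actually gives is the bounded continuous function $G(z)=\langle\Delta^{-\rmi\bar z/2}A^*\Omega,\Delta^{\rmi z/2}B\Omega\rangle$ on $-1\le\mathrm{Im}\,z\le 0$, analytic inside, with $G(t)=\omega(A\sigma_t(B))$. Your identities $\Delta^{1/2}A^*\Omega=JA\Omega$ and $\Delta^{1/2}B\Omega=JB^*\Omega$ are boundary identities at $\mathrm{Im}\,z=-1$. They give $G(t-\rmi)=\langle JA\Omega,J\Delta^{\rmi t}B^*\Omega\rangle=\langle\Delta^{\rmi t}B^*\Omega,A\Omega\rangle=\omega(\sigma_t(B)A)$. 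That is the KMS condition at $\beta=-1$, which is how Bratteli--Robinson state it.

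\textbf{The stated version is false in general.} The upper-strip version with $\omega(A\sigma_{t+\rmi}(B))=\omega(\sigma_t(B)A)$ fails already in two dimensions. Let $M_2(\C)$ act on itself by left multiplication, with the Hilbert--Schmidt inner product and $\Omega=\rho^{1/2}$, $\rho=\mathrm{diag}(p_1,p_2)$, $p_1\neq p_2$. Then $\Delta X=\rho X\rho^{-1}$ and $\sigma_t(X)=\rho^{\rmi t}X\rho^{-\rmi t}$. For $A=\ket{1}\bra{2}$ and $B=\ket{2}\bra{1}$:
\begin{itemize}
\item $\omega(A\sigma_t(B))=p_1(p_2/p_1)^{\rmi t}$;
\item $\omega(\sigma_t(B)A)=p_2(p_2/p_1)^{\rmi t}$.
\end{itemize}
The continuation matches at $t-\rmi$, but at $t+\rmi$ it gives $(p_1^2/p_2)(p_2/p_1)^{\rmi t}$. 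So the sign error sits in the displayed statement itself, and a correct proof should have exposed it rather than confirmed it.

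\textbf{How your bookkeeping hid this.} There are two slips in the last step. First, antiunitarity gives $\langle JB^*\Omega,\Delta^{\rmi t}JA\Omega\rangle=\langle\Delta^{\rmi t}A\Omega,B^*\Omega\rangle$, not $\langle B^*\Omega,\Delta^{\rmi t}A\Omega\rangle$. Second, even $\langle B^*\Omega,\Delta^{\rmi t}A\Omega\rangle$ equals $\omega(B\sigma_t(A))=\omega(\sigma_{-t}(B)A)$, not $\omega(\sigma_t(B)A)$.

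\textbf{The fix.} Either state and prove the $\beta=-1$ form $\omega(A\sigma_{t-\rmi}(B))=\omega(\sigma_t(B)A)$, using $G$ as above. Or switch to the convention $\sigma_t(A)=\Delta^{-\rmi t}A\Delta^{\rmi t}$ (modular Hamiltonian $-\log\Delta$); for that flow, the upper-strip, $\beta=1$ statement is correct.
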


The modular structure is intrinsic: it depends only on the algebra
and on the cyclic-separating vector, not on additional dynamical
input. When $\mathcal{M}$ is a local algebra of an algebraic QFT
and $\Omega$ is the vacuum, the modular flow $\sigma_t$ encodes
algebraic content of the theory in geometric form---when this
content can be identified.

\begin{remark}[The cyclic-separating hypothesis is non-trivial in
the Galilean case]\label{rem:tomita_galilean}
The Tomita--Takesaki construction requires the vector $\Omega$ to be
cyclic and separating for $\mathcal{M}$. In a relativistic
Haag--Kastler net, the Reeh--Schlieder theorem
(Definition~\ref{def:reeh_schlieder} and surrounding discussion)
guarantees that the vacuum is cyclic and separating for every local
algebra with appropriate complement, so modular flow is defined for
each. In a Galilean Haag--Kastler net (under the hypotheses of the
companion paper~\cite{Pachon2026b}), the vacuum is
\emph{not} separating for any local field algebra; modular flow is
correspondingly undefined. This is the modular-theoretic content of
strand~(iii) of Sec.~\ref{sec:strands}: the modular machinery of the
present section is fundamentally relativistic-only, in the precise
sense that its hypotheses fail in the Galilean Fock-representation
setting.
\end{remark}

\subsection{The Bisognano--Wichmann theorem}\label{sec:bisognano}

Consider a Wightman QFT on $(3+1)$-dimensional Minkowski spacetime
with vacuum vector $\Omega_0$. The right Rindler wedge is the
spacetime region
\begin{equation}\label{eq:rindler_wedge}
W_R=\{x\in\R^{3,1}:x^1>|x^0|\},
\end{equation}
with causal complement the left wedge $W_L=\{x:x^1<-|x^0|\}$. By the
Reeh--Schlieder theorem, $\Omega_0$ is cyclic and separating for the
local algebra $\A(W_R)$, so Tomita--Takesaki applies.

\begin{theorem}[Bisognano--Wichmann~\cite{BisognanoWichmann1975,
BisognanoWichmann1976}]\label{thm:bisognano_wichmann}
For a Wightman QFT on Minkowski spacetime, the modular flow of the
right wedge algebra $\A(W_R)$ relative to the vacuum vector
$\Omega_0$ is geometric:
\begin{equation}\label{eq:BW_modular_flow}
\Delta_{W_R}^{\rmi{}t}=U\bigl(\Lambda_1(2\pi t)\bigr),
\end{equation}
where $U(\Lambda_1(s))$ is the unitary representation of the boost
in the $x^0$-$x^1$ plane with rapidity $s$. The modular conjugation
$J_{W_R}$ is the CRT operator implementing combined charge
conjugation, $x^1$ reflection, and time reversal.
\end{theorem}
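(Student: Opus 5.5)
The plan is to follow the original Bisognano--Wichmann strategy. I would prove \eqref{eq:BW_modular_flow} by showing that the operator built from the boost, $J_\Theta\,U(\Lambda_1(\rmi\pi))$, coincides with the Tomita operator $S$ of \eqref{eq:tomita_S} for $\A(W_R)$. Uniqueness of the polar decomposition then identifies $J_{W_R}$ with the CRT operator $\Theta$ and $\Delta_{W_R}^{1/2}$ with $U(\Lambda_1(\rmi\pi))=\rme^{-\pi K}$. Here $K$ is the self-adjoint boost generator, $U(\Lambda_1(s))=\rme^{\rmi sK}$. Taking imaginary powers gives $\Delta^{\rmi t}=\rme^{-2\pi\rmi tK}$, which is $U(\Lambda_1(2\pi t))$ up to the sign convention for the boost parameter. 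The Reeh--Schlieder theorem supplies the hypotheses of Theorem~\ref{thm:tomita_takesaki}, since $\Omega_0$ is cyclic and separating for $\A(W_R)$; I take this as given from the discussion preceding the statement.

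The first step is analytic continuation in the boost rapidity. For a polynomial $A=\phi(f_1)\cdots\phi(f_n)$ in Wightman fields smeared with test functions supported in $W_R$, I would consider $s\mapsto U(\Lambda_1(s))A\Omega_0$. Writing the boost action on $W_R$ in light-cone coordinates $x^\pm=x^0\pm x^1$, we have $\Lambda_1(s):x^\pm\mapsto \rme^{\pm s}x^\pm$. The Wightman function of the boosted points should extend analytically to $0<\mathrm{Im}\,s<\pi$. The reason is that, for points in $W_R$, the complexified boost $\Lambda_1(s+\rmi\theta)x=\cosh\theta\,\Lambda_1(s)x+\rmi\sinh\theta\,(\text{rotated vector})$ has imaginary part in the forward cone $V^+$ for $0<\theta<\pi$. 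The vector-valued function therefore lies in the forward tube, where the spectrum condition (pointed cone $\overline{V}^+$) guarantees holomorphy of $\prod\rme^{-\rmi p\cdot\zeta}$ applied to $\Omega_0$. Ordering the points so that the successive differences lie in the tube gives the strip analyticity. This is exactly the multidimensional analyticity that Sec.~\ref{sec:reeh_schlieder_strand} notes is absent in the Galilean case.

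The second, and main, step is to evaluate the boundary value at $\mathrm{Im}\,s=\pi$. Since $\Lambda_1(\rmi\pi)$ acts as $-1$ on the $(x^0,x^1)$ plane, the boundary value gives Wightman functions at reflected points with reversed operator order. I would then invoke the CPT theorem, itself obtained from the Bargmann--Hall--Wightman extension to the permuted extended tube together with Jost points; here $W_R$ is, after reflection, a real environment of Jost points. This should yield $\bra{B\Omega_0}U(\Lambda_1(\rmi\pi))A\Omega_0\rangle=\overline{\bra{B\Omega_0}\Theta A^*\Omega_0\rangle}$, up to antilinearity bookkeeping, for all $B$ in the commutant-type algebra $\A(W_L)$. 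Combining this with locality (wedge duality, or at least $\A(W_L)\subset\A(W_R)'$) gives $\Theta\,\rme^{-\pi K}A\Omega_0=A^*\Omega_0$, that is, $S\subset\Theta\rme^{-\pi K}$. I expect this identification to be the obstacle, because it is where CPT, locality and the tube geometry must fit together with correct signs, domains and operator ordering.

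The final step upgrades the inclusion to equality. I would show that the polynomial core is a core for $\rme^{-\pi K}$; boost invariance of the core plus Nelson-type arguments should suffice. Then $\Theta\rme^{-\pi K}$, being closed, equals the closure $\bar S$. The same argument applied to $W_L$ gives $S^*\subset \Theta\rme^{\pi K}$, which closes the loop via $S^*=F$. Uniqueness of the polar decomposition then finishes the proof and identifies $J_{W_R}=\Theta$ as the CRT operator.
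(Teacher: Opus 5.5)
The paper gives no proof of this statement; it quotes it from Bisognano--Wichmann. Your architecture is theirs: exhibit $\Theta\,\rme^{-\pi K}$ as the closed Tomita operator of $\A(W_R)$, then read off $J$ and $\Delta^{1/2}$ from uniqueness of the polar decomposition. The gap is in your first step.

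You claim that for an arbitrary monomial $A=\phi(f_1)\cdots\phi(f_n)$ with supports in $W_R$, the jointly complex-boosted configuration can be put into the forward tube by ordering the points. The correct formula is $\Lambda_1(\rmi\theta)x=\cos\theta\,x+\rmi\sin\theta\,\tilde x$ with $\tilde x=(x^1,x^0,0,0)$. You wrote $\cosh\theta$ and $\sinh\theta$, but it is the vanishing of $\sin\theta$ that closes the strip at $\pi$. The successive imaginary parts are then $\sin\theta\,(\tilde x_{k+1}-\tilde x_k)$, which lie in $V^+$ iff $x_{k+1}-x_k\in W_R$.

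$W_R$ contains timelike-separated pairs, for example $x_1=(0,2,0,0)$ and $x_2=(1,2,0,0)$. For these, $\mathrm{Im}(z_2-z_1)=\sin\theta\,(0,1,0,0)$ is spacelike, so neither order lies in the tube. Reordering $\phi(f_1)\phi(f_2)$ would be illegitimate anyway, since these fields need not commute.

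BHW does not rescue this. The real configuration is not a (permuted) Jost point, because such points are mutually spacelike. So its image under the complex Lorentz transformation $\Lambda_1(\rmi\theta)$ is not in the permuted extended tube either. In particular $W_R^n$ is \emph{not} an environment of Jost points, contrary to what you use in Step 2. Neither the strip analyticity of $s\mapsto U(\Lambda_1(s))A\Omega_0$ for general polynomials nor the identification of its value at $s=\rmi\pi$ follows from your argument.

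The missing idea is that you never need to continue such vectors. Restrict to monomials with \emph{wedge-ordered} supports: $\mathrm{supp}f_1\subset W_R$ and $\mathrm{supp}f_{k+1}-\mathrm{supp}f_k\subset W_R$. Because $W_R$ is a convex cone of spacelike vectors, these configurations are mutually spacelike and boost invariant. For them the argument goes through as follows.

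Your tube argument is valid on $0<\mathrm{Im}\,s<\pi$, and the usual polynomial bounds give norm boundary values. The boundary value at $s=\rmi\pi$ is $\phi(f_1\circ j)\cdots\phi(f_n\circ j)\Omega_0$, with $j=\Lambda_1(\rmi\pi)$.

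This equals $\Theta A^*\Omega_0=\phi(f_n\circ j)\cdots\phi(f_1\circ j)\Omega_0$ by locality. The order reversal costs nothing precisely because the supports are spacelike. The CPT/CRT theorem enters only through the existence of an antiunitary $\Theta$ with $\Theta\phi(x)\Theta^{-1}=\phi(jx)$ (Hermitian scalar case).

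The span $\mathcal{D}$ of these vectors is dense by the Reeh--Schlieder argument. A vector orthogonal to $\mathcal{D}$ yields forward-tube holomorphic functions whose boundary values vanish on an open set, hence identically. Since $\mathcal{D}\subset D(\rme^{-\pi K})$ is invariant under $\rme^{\rmi sK}$, it is a core for $\rme^{-\pi K}$; this follows from Fourier uniqueness for the spectral measure of $K$, and no Nelson-type argument is needed. Hence the Tomita map on $\mathcal{D}$ has closure $\Theta\rme^{-\pi K}$.

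Two smaller points concern your last step.

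First, to pass from this polynomial Tomita map to the Tomita operator $S$ of the von Neumann algebra $\A(W_R)$, you need a hypothesis that ``for a Wightman QFT'' hides. The closed smeared fields in $W_R$ (resp.\ $W_L$) must be affiliated with $\A(W_R)$ (resp.\ with an algebra contained in $\A(W_R)'$). Bisognano and Wichmann impose an assumption of this kind.

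Second, the $W_L$ inclusion runs the other way. Locality gives $\Theta\rme^{\pi K}=\overline{F_0}\subset F=S^*$, where $F_0$ is the analogous map built from wedge-ordered monomials in $W_L$. Its adjoint is $\bar S\subset\Theta\rme^{-\pi K}$, which together with $\Theta\rme^{-\pi K}\subset\bar S$ from $W_R$ gives equality. Your inclusion $S^*\subset\Theta\rme^{\pi K}$ only reproduces, after taking adjoints, the inclusion you already had.
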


The result is structurally striking: it identifies a physically
meaningful spacetime symmetry (the Lorentz boost) with an
intrinsically algebraic object (the modular flow of a local algebra
relative to the vacuum). The boost is the kinematic content of
acceleration along the $x^1$ axis; a uniformly accelerated observer
with worldline confined to $W_R$ experiences time evolution along
boost orbits. Bisognano--Wichmann says this evolution is not
external dynamical structure but is encoded in the algebraic data
(local algebra plus vacuum) of any Poincar\'e-covariant Wightman
theory. The theorem has been substantially generalized: 
Sewell~\cite{Sewell1982} extended it to a class of curved
spacetimes admitting bifurcate Killing horizons (with the modular
flow identified with the Killing flow);
Kay--Wald~\cite{KayWald1991} characterized the Hartle--Hawking state
on Schwarzschild via modular conditions;
Borchers~\cite{Borchers2000} developed the general theory of modular
inclusions and half-sided modular translations.

\subsection{The Unruh effect as algebraic content}\label{sec:unruh}

The KMS condition~\eqref{eq:kms_modular} for the wedge algebra at
inverse modular parameter $\beta=1$ admits a direct physical reading.
An observer with proper acceleration $a$ moving along a worldline
in $W_R$ has proper time $\tau$ related to the boost rapidity $s$ by
\begin{equation}\label{eq:rapidity_proper_time}
s=\frac{a\,\tau}{c}.
\end{equation}
Combining~\eqref{eq:BW_modular_flow}
and~\eqref{eq:rapidity_proper_time}, modular evolution by parameter
$t$ corresponds to evolution by proper time
$\tau=2\pi c\,t/a$. The KMS condition at $\beta=1$ in the modular
parameter therefore becomes a KMS condition with inverse proper-time
period
\begin{equation}\label{eq:beta_tau}
\beta_\tau=\frac{2\pi\,c}{a},
\end{equation}
corresponding to a thermal state at temperature
\begin{equation}\label{eq:unruh_temperature}
\TU=\frac{\hbar}{\kB\,\beta_\tau}
=\frac{\hbar\,a}{2\pi\,c\,\kB}.
\end{equation}
This is the Unruh temperature~\cite{Unruh1976,Hawking1975}.

\paragraph{The Unruh formula confirms the $c/\hbar$ structural
reading of Sec.~\ref{sec:c_hbar}.}
The two fundamental constants enter
equation~\eqref{eq:unruh_temperature} in their structural roles
exactly. The constant $c$ enters
through~\eqref{eq:rapidity_proper_time}, converting the rapidity
parameter (the dimensionless boost angle, intrinsic to Minkowski
spacetime) to physical proper time. This is $c$ acting \emph{within
spacetime}: the rapidity-to-time relation is fixed by the Minkowski
metric. The constant $\hbar$ enters in passing from the inverse
proper-time period $\beta_\tau$ (an inverse-time quantity, the
natural KMS parameter) to physical temperature
$\TU=\hbar/(\kB\beta_\tau)$. This is $\hbar$ acting \emph{between}
the kinematic structure (where the modular flow lives) and the
dynamical-thermodynamic content (temperature as a property of
states).

The factorization in~\eqref{eq:unruh_temperature} is not accidental:
it is the algebraic Unruh effect's manifestation of the structural
roles of the two constants identified in Sec.~\ref{sec:c_hbar}. The
factor $a/2\pi$ comes from the geometry of the Rindler wedge; the
factor $\hbar/c$ comes from the canonical-commutator-on-relativistic-net
architecture.

\subsection{Type-III$_1$ universality of local algebras}
\label{sec:type_III_1}

A striking result in the algebraic structure theory of QFT is that
the local algebras $\A(\Op)$ associated with bounded spacetime
regions $\Op$ in Poincar\'e-covariant QFTs are, under mild
assumptions, all isomorphic to the unique hyperfinite type-III$_1$
factor.

\begin{theorem}[Connes--Haagerup classification;
Buchholz--D'Antoni--Fredenhagen]\label{thm:type_III_1}
The hyperfinite type-III$_1$ factor is unique up to
isomorphism~\cite{Connes1973,Connes1976,Haagerup1987}. Local
algebras of QFT---under reasonable assumptions including
Reeh--Schlieder for the vacuum and approximate locality---are
hyperfinite type-III$_1$ factors~\cite{BuchholzDAntoniFredenhagen1987}.
\end{theorem}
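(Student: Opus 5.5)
The statement has two logically independent halves, and I would prove them separately: (a) the uniqueness of the hyperfinite type-III$_1$ factor, a pure von Neumann algebra result, and (b) the identification of local algebras $\A(\Op)$ with that factor, which uses the QFT input. For (a) I would not reprove anything from scratch. I would invoke Connes' classification of injective factors~\cite{Connes1976}: injective $\Leftrightarrow$ hyperfinite for factors with separable predual, and the injective factors of type III$_\lambda$, $0<\lambda<1$, are unique. For $\lambda=1$ I would then invoke Haagerup's theorem~\cite{Haagerup1987}: every injective III$_1$ factor with separable predual has trivial bicentralizer. By Connes' criterion, trivial bicentralizer implies isomorphism with the Araki--Woods factor $R_\infty$, which gives uniqueness. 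The only bookkeeping is to check that the type-III$_\lambda$ invariant is Connes' $S(\mathcal{M})=\bigcap_\omega\spec\Delta_\omega$, with $S(\mathcal{M})=[0,\infty)$ characterizing III$_1$~\cite{Connes1973}.

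For (b) I would organize the argument as in~\cite{BuchholzDAntoniFredenhagen1987} and establish three properties in order: factoriality, hyperfiniteness and type III$_1$.

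\emph{Hyperfiniteness.} I would derive the split property from a nuclearity (energy-compactness) assumption. This gives, for $\overline{\Op_1}\subset\Op_2$, an intermediate type-I factor $\A(\Op_1)\subset\mathcal{N}\subset\A(\Op_2)$. Exhausting $\Op$ by an increasing sequence $\Op_n\uparrow\Op$ and assuming inner continuity, $\A(\Op)=\bigl(\bigcup_n\mathcal{N}_n\bigr)''$ is generated by an increasing sequence of type-I factors. It is therefore injective, hence hyperfinite by Connes.

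\emph{Factoriality.} This would follow from uniqueness of the vacuum together with the split property and Reeh--Schlieder, which together make the center of $\A(\Op)$ trivial.

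\emph{Type.} For wedges I would use Theorem~\ref{thm:bisognano_wichmann}. The modular group of $\A(W_R)$ is the boost group, and in a vacuum representation with a nontrivial net the boost generator has purely absolutely continuous spectrum on $\Omega_0^\perp$. Combined with the factor property and the fact that $\A(W_R)'\cap\A(W_R)$ is trivial, this gives $S(\A(W_R))=[0,\infty)$, so the wedge algebra is III$_1$.

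The main obstacle is transferring the III$_1$ property from wedges to bounded double cones $\Op$. There is no geometric modular flow for $\A(\Op)$, and the Connes spectrum is not obviously monotone under inclusions. My first attempt would follow Buchholz--D'Antoni--Fredenhagen. Assume a non-trivial scaling limit at a point $x\in\Op$. Show that the quasilocal algebra at $x$ (the germ $\bigcap_{\Op\ni x}\A(\Op)''$ in a suitable ultraproduct sense) inherits a dilation-covariant scaling-limit theory, and that the scaling-limit theory has III$_1$ wedge algebras by the argument above. Then use Connes' ratio-set/asymptotic-centralizer characterization to conclude that $\A(\Op)$ has no nontrivial asymptotically central states that would lower $S$ below $[0,\infty)$.

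If that route stalls, my fallback is Longo's half-sided modular inclusion technique~\cite{Borchers2000}. Here a local algebra that contains a translated wedge-type half-sided modular subalgebra is forced to be III$_1$. The cost is that one must then restrict to regions and theories in which such an inclusion is available, for example conformal theories or the light-ray algebras.
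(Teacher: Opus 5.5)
You have the same architecture as the paper, which gives no argument beyond its citations. Part (a) is correct and is what the Connes and Haagerup references supply. Your hyperfiniteness step (split property, an increasing sequence of type-I factors, injectivity) is the standard argument. Note only that it is nuclearity, not mere Haag--Swieca energy-compactness, that yields the split property.

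\textbf{Factoriality of double-cone algebras.} This is the first genuine gap: you assert that factoriality of $\A(\Op)$ follows from uniqueness of the vacuum, Reeh--Schlieder and the split property, but you give no mechanism. Uniqueness of the vacuum only makes the global algebra irreducible. A central element of $\A(\Op)$ commutes with $\A(\Op)\vee\A(\Op')$, and neither Reeh--Schlieder nor the split property makes that algebra irreducible. Under Haag duality, $\A(\Op)$ is a factor exactly when $\A(\Op)\vee\A(\Op')=\mathcal{B}(\Hil)$, which is an extra structural property; it holds e.g.\ for free fields but is not part of the general axioms. What your hypotheses support is $\A(\Op)\cong\mathcal{R}_1\otimes\mathcal{Z}(\Op)$, with $\mathcal{Z}(\Op)$ the center. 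That is also the form in which the result of~\cite{BuchholzDAntoniFredenhagen1987} is usually quoted. To reach ``factors'' you must add factoriality, or Haag duality plus that irreducibility, to the assumptions.

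\textbf{The wedge step.} Here you use the wrong object. $\A(W_R)'\cap\A(W_R)$ is the center, so invoking it just repeats the factor property. What you need is that the \emph{centralizer} of the vacuum, $\{A\in\A(W_R):\Delta^{\rmi t}A\Delta^{-\rmi t}=A\ \forall t\}$, is trivial. Without this the conclusion fails. $S(\M)$ is an intersection over all faithful normal states, and even semifinite factors, for which $S(\M)=\{1\}$, carry states with $\spec\Delta_\varphi=[0,\infty)$.

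The repair is standard:
\begin{enumerate}
\item If $A$ is modular-invariant, then $\psi=A\Omega_0$ is boost-invariant. Hence $\langle\psi,U(a)\psi\rangle=\langle\psi,U(\rme^{s}a)\psi\rangle$ for light-like $a$ in the boost plane and all $s\in\R$.
\item Letting $s\to-\infty$ gives invariance of $\psi$ under translations in that plane.
\item By the spectrum condition, $\psi$ is then invariant under all translations, so $\psi\in\C\Omega_0$.
\item Since $\Omega_0$ is separating, $A\in\C\one$.
\end{enumerate}
Connes' theorem that $S(\M)=\spec\Delta_\varphi$ whenever the centralizer $\M_\varphi$ is a factor then gives $S(\A(W_R))=[0,\infty)$. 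Wedge factoriality comes for free, since the center lies in the centralizer.

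\textbf{Transfer to double cones.} This is the substantive content of the cited result, and your sketch does not supply it. A scaling limit at an interior point $x\in\Op$ only controls subalgebras $\A(\Op_\lambda)\subset\A(\Op)$ shrinking to $x$. Type does not pass from a subalgebra to the ambient algebra: your own split step places the type-III algebras $\A(\Op_1)$ inside type-I factors. The only place where $\Op$ is asymptotically a wedge is its edge, so that is where wedge-type modular data could be imported. Your half-sided-modular fallback needs a half-sided modular inclusion into $\A(\Op)$ itself. As you note, such inclusions are available geometrically essentially only for conformal or chiral nets, so the fallback cannot cover the general case.
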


The algebraic content is therefore in some sense universal: the
local algebra of free electrodynamics, of an interacting Yang--Mills
theory, and of any other reasonable Poincar\'e-covariant QFT all sit
inside the same abstract type-III$_1$ factor. The physical content
of a given theory is encoded not in the algebra itself but in the
\emph{state}---which selects a particular GNS representation---together
with the modular structure intrinsic to the (algebra, state) pair.

\paragraph{Implication for the rest of the series.}
This inverts the standard textbook organization of QFT. In the
textbook picture, different quantum theories are characterized by
different Hilbert spaces and different operator algebras. From the
algebraic-QFT perspective, the algebra is universal; theories differ
in their states. The structural priority of states over algebras
becomes especially pointed in the type-III$_1$ setting, and is
exploited by the fourth paper of the
series~\cite{Pachon2026d} as premise (B2) of the GR-forcing
argument: the spatial metric in the dynamical-metric setting must
enter through the \emph{state} on the universal local algebra rather
than as background data, and the algebraic content of local regions
is universal across spacetimes.

\subsection{Modular structure as background-independent ingredient}
\label{sec:modular_bg_indep}

The modular flow $\sigma_t$ associated to a pair (von Neumann
algebra, cyclic-separating vector) is determined by the pair alone:
no background spacetime, no Hamiltonian, no Killing vector field is
required as input. This is the structural feature that motivates
treating modular theory as a candidate background-independent
algebraic ingredient in the dynamical-metric setting.

The Connes cocycle theorem~\cite{Connes1973} sharpens this: for two
states $\omega_1,\omega_2$ on the same algebra, the modular flows
are related by
\begin{equation}\label{eq:connes_cocycle}
\sigma_t^{\omega_2}(A)=u_t\,\sigma_t^{\omega_1}(A)\,u_t^*,
\end{equation}
where $u_t$ is a unitary cocycle valued in $\mathcal{M}$. The two
flows differ by an inner automorphism. Modular flow modulo inner
automorphisms is therefore canonical to the algebra alone,
independent of state choice.

\paragraph{The thermal time hypothesis and the algebraic substrate.}
The Connes--Rovelli thermal time
hypothesis~\cite{ConnesRovelli1994} takes this inversion seriously
and proposes that, in a generally covariant theory where there is
no external time parameter, the physical time experienced by a
system in state $\omega$ is the modular time of $\omega$. The
proposal is speculative; we do not endorse it here. We note it as a
concrete instance of the broader structural point that
\emph{modular theory provides a candidate for what time evolution
might mean in the absence of fixed background structure}, and that
this candidate is intrinsic to the algebraic data.

A weaker structural fact survives even if the full thermal time
hypothesis is rejected: the modular flow associated with the local
algebra of a region (in a state with the Reeh--Schlieder property)
is canonical algebraic content of the theory, available without any
Hamiltonian or Killing-flow input. The third through sixth papers
of the series leverage this without committing to the thermal time
hypothesis specifically.

\paragraph{The dependence on Reeh--Schlieder is essential.}
By the modular corollary of strand~(iii)
(Sec.~\ref{sec:reeh_schlieder_strand};
\cite{Pachon2026b}), modular flow on the local field algebras
of a Galilean Haag--Kastler net (with Fock representation, or with
Bargmann-charge regularity) is undefined, because no such net has a
separating vacuum. The modular-theoretic content available in
standard relativistic AQFT is therefore specifically a consequence
of the relativistic Reeh--Schlieder property and is unavailable in
the Galilean Fock setting; the third paper of the
series~\cite{Pachon2026c} establishes that this collapse
extends rigorously to the Newton--Cartan ($c\to\infty$) limit on
static globally hyperbolic spacetimes (with Schwarzschild as a
worked example), so that the relativistic-side modular content
(Reeh--Schlieder, Bisognano--Wichmann, Hartle--Hawking, Unruh, and
Hawking) collapses jointly with the speed-of-light limit.

The implications for the fourth paper of the
series~\cite{Pachon2026d} are direct: the modular-theoretic
substrate that survives the dynamical-metric move is the relativistic
substrate of the present section---available because the local
algebras are type-III$_1$ factors with cyclic-separating Hadamard
states---and not the Galilean substrate, which is collapsed.

\section{Conclusion}\label{sec:conclusion}

\paragraph{Summary.}
The present paper has assembled, in a single operator-algebraic
framework, four ingredients and a structural conjecture relating
non-relativistic quantum mechanics to special relativity. The Weyl
C*-algebra over a symplectic phase space, with the GNS construction
and Stone--von Neumann uniqueness, fixes the algebraic substrate
(Sec.~\ref{sec:algebraic}); the photon sector of free QED
instantiates it as the worked example, with the integer photon
spectrum (Theorem~\ref{thm:integer_N}), the Planck relation
$E=\hbar\omega$ (Theorem~\ref{thm:planck}), and the spin spectrum
$\pm\hbar$ (Theorem~\ref{thm:spin}) emerging from a single canonical
commutator with scale $\hbar$ (Sec.~\ref{sec:photon}). The
structural roles of the two fundamental constants are then
articulated: $c$ acts within each Fourier-conjugate space (defining
the null cone in spacetime and in energy--momentum), and $\hbar$
acts between them, converting kinematic phase rates into dynamical
observables (Sec.~\ref{sec:c_hbar}). The SR-selection conjecture
(Conjecture~\ref{conj:structural_selection}) then states that a
Haag--Kastler net with sharper-than-equal-time microcausality,
positive-energy spectrum, and a canonical commutator with $\hbar$
as the single quantum scale cannot be Galilean
(Sec.~\ref{sec:conjecture}). Three strands of evidence are
identified---Hegerfeldt instantaneous spreading, the absence of a
known Galilean multi-particle resolution (with Bargmann mass
superselection as a sub-mechanism), and Reeh--Schlieder failure on
Galilean Haag--Kastler nets, the last established as a precise
no-go theorem in~\cite{Pachon2026b}. The modular-theoretic
content (Tomita--Takesaki, Bisognano--Wichmann, the Unruh effect as
KMS at $T_{U}=\hbar a/(2\pi c k_{B})$, type-$\mathrm{III}_{1}$
universality; Sec.~\ref{sec:modular_geometry}) furnishes a
background-independent algebraic substrate---available without a
Hamiltonian or Killing flow---that the third through sixth papers
of the series then operate on.

\paragraph{Status of the conjecture.}
The conjecture is not a theorem. Strand~(i) is rigorous but
non-discriminating between Galilean and relativistic kinematics in
isolation. Strand~(ii) is structural and not currently expressible
as a single rigorous no-go statement. Strand~(iii) is the
load-bearing rigorous strand: \cite{Pachon2026b} establishes it
as a no-go theorem on the explicit axiom set (G1)--(G7), with the
Strengthened Obstruction Theorem replacing the Fock-representation
hypothesis (G7) by the broader Bargmann-charge regularity condition
(G7$^{*}$). Closing the gap between this rigorous theorem and the
full conjecture---in particular, removing the Bargmann-charge
hypothesis on the canonical fields---is the natural target for
further work and is left open by the present series.

\paragraph{Outlook.}
The roadmap to the rest of the series was given in
Sec.~\ref{sec:motivation}. To recapitulate
schematically: \cite{Pachon2026b} establishes the
flat-space no-go theorem; \cite{Pachon2026c} extends the
collapse of modular structure to the Newton--Cartan ($c\to\infty$)
limit of the Klein--Gordon AQFT on static globally hyperbolic
spacetimes (with Schwarzschild as a worked example);
\cite{Pachon2026d} treats the dynamical-metric extension and
forces field equations of the form $G_{\mu\nu}=8\pi G\,
\langle\hat{T}_{\mu\nu}\rangle_\omega$, with Newton's $G$ as the
empirical proportionality constant; \cite{Pachon2026e} establishes
an equivalence theorem identifying the necessary and sufficient
algebraic-axiom subsets for that forcing; and \cite{Pachon2026f}
extends the architecture to gravity-dressed crossed-product
algebras and exhibits the Galilean obstruction in that setting.
The arc of the series is, in that order, substrate, no-go, limit,
forcing, equivalence, and crossed-product extension.

\begin{acknowledgments}
This work was supported by the R+D+I efforts from guane Enterprises.
\end{acknowledgments}

\section*{Data Availability Statement}

Data sharing is not applicable to this article as no new data were
created or analyzed in this study.

\bibliography{AlgebraicArchitecture}

\end{document}